\documentclass[journal]{IEEEtran}

\usepackage{cite}
\usepackage{microtype} 
\usepackage{amsmath,amssymb,amsfonts}
\usepackage{amsthm}
\usepackage{tikz}
\usetikzlibrary{arrows.meta,backgrounds,bending,calc}
\usepackage{graphicx}
\usepackage{booktabs}
\usepackage{array}
\usepackage{algorithm}
\usepackage[noend]{algpseudocode}

\usepackage{xcolor}
\usepackage{flafter}   
\usepackage{enumitem}
\usepackage{xspace}
\usepackage[hidelinks]{hyperref}
\usepackage{url}
\usepackage{balance}   

\graphicspath{{diagrams/}}

\newcommand{\sys}{\textsc{Aegis}\xspace}
\newcommand{\acone}{\textsf{AC-1}\xspace}
\newcommand{\aconea}{\textsf{AC-1.a}\xspace}
\newcommand{\aconeb}{\textsf{AC-1.b}\xspace}
\newcommand{\actwo}{\textsf{AC-2}\xspace}
\newcommand{\heading}[1]{\par\smallskip\noindent\textbf{#1.}\enspace\ignorespaces}
\definecolor{crExposed}{RGB}{255,223,223}
\definecolor{crSealed}{RGB}{221,242,223}
\colorlet{crExposedLine}{red!70!black}
\colorlet{crSealedLine}{green!55!black}
\newcommand{\HW}{\mathsf{HW}}
\newcommand{\ExeInty}{\mathsf{ExeInty}}
\newcommand{\Iso}{\mathsf{Iso}}
\newcommand{\RAunf}{\mathsf{RA\text{-}unf}}
\newcommand{\Adv}{\mathsf{Adv}}
\newcommand{\negl}{\mathsf{negl}}
\newcommand{\leak}{\mathcal{L}}
\newcommand{\Faithful}{\mathsf{Faithful}}
\newcommand{\NonInt}{\mathsf{NonInt}}
\newcommand{\Verb}{\mathsf{Verbatim}}
\newcommand{\DestPol}{\mathsf{DestPolicy}}
\newcommand{\Pol}{\Delta}
\newcommand{\scheme}{\Pi_{\textsc{Aegis}}}

\theoremstyle{plain}
\newtheorem{theorem}{Theorem}
\newtheorem{lemma}{Lemma}
\newtheorem{definition}{Definition}

\begin{document}


\title{The Proxy Knows Too Much:\\ Sealing LLM API Routers with Attested TEEs}

\author{
Sipeng~Xie$^{1}$, Qianhong~Wu$^{1}$, Hengrun~Lu$^{1}$, Ziliang~Sun$^{1}$, Qi~Wu$^{1}$, Bo~Qin$^{2}$, Qin Wang$^{3}$
\\
\smallskip
$^{1}$\textit{Beihang University} $|$ $^{2}$\textit{Renmin University of China} $|$ $^{3}$\textit{Independent}



}


\maketitle

\begin{abstract}
Agents increasingly access large language models (LLMs) through API routers. A router terminates the client’s transport-layer security session and opens a separate upstream session, so it holds the full interaction in plaintext. This makes the router an application-layer man-in-the-middle: it can rewrite agent tool calls, swap dependencies for typosquatted packages, trigger attacks only under audit-evading conditions, and passively exfiltrate secrets. Existing client-side defenses are evadable.

We propose \sys, a provider-transparent attested API router whose data path is a client-verified faithful passthrough. \sys confines plaintext handling to a small hardware-enclave component while leaving authentication, scheduling, accounting, and management on the untrusted host. The client verifies the enclave before releasing plaintext. The host can neither read nor alter the interaction, and plaintext leaves only toward destinations fixed by the measured image. We show that all four malicious-router attack classes succeed against a plaintext-access baseline and are blocked by \sys, including adaptive tests against the same boundary. The trusted path is $851$ lines, carries three provider-native APIs without conversion, and completes every request under real-provider workload and concurrency. In a seeded audit pilot, two commodity coding agents find eight and ten of ten planted invariant violations. The local relay overhead is about six milliseconds per request.
\end{abstract}

\begin{IEEEkeywords}
LLM API Router, Agentic Security, Trusted Hardware, Remote Attestation, Prompt Confidentiality.
\end{IEEEkeywords}

\IEEEpeerreviewmaketitle

\section{Introduction}
\label{sec:intro}

Agents, including command-line coding assistants, increasingly reach model providers through an intermediary. They point at an LLM API router, a gateway that authenticates the caller, selects an upstream account, and forwards the request to one of several providers. A single endpoint multiplexes many providers and hides their keys behind one gateway key, with quota and billing centralized. The most widely deployed open-source router template draws tens of thousands of installations~\cite{yaim}.

This convenience hides a structural problem. A router terminates the client's transport-layer security (TLS) session and opens a separate one to the provider. The full request and response exist as plaintext inside it. It sees the user's prompts, the tool definitions and outputs the agent exchanges, the credentials in the traffic, and the tool-call payloads the provider returns. A router is an application-layer machine-in-the-middle by design, and the agent trusts it completely. Figure~\ref{fig:contrast} contrasts what the host observes on a plaintext router with what it observes through \sys.

\heading{The attack surface}
A recent measurement study formalizes what a malicious router can do from this position and documents real abuse in the wild~\cite{yaim}. First, the router can rewrite a tool call before it reaches the client. The rewrite is schema-valid and the client never sees the original, so one altered shell command is enough for remote code execution on the developer's machine. Second, it can swap a dependency for a typosquatted package inside an install command, slipping past allowlists that only check domains and planting a durable supply-chain foothold. Third, it can fire the rewrite only on a chosen trigger, such as a tool name, a keyword, or a request count; an auditor running a finite set of probes sees only benign behavior. Fourth, it can passively scan the traffic for credentials and exfiltrate them. The first three are response-side injection and its evasion variants; the fourth is passive request-path exfiltration. We label these attack classes \acone, \aconea, \aconeb, and \actwo, following the study.

The study also tests client-side defenses, including policy gates, anomaly screening, and a log of observed responses, and finds each evadable. It concludes that a sound fix must move response integrity to a party the client can trust, and that client-side mechanisms observing only the delivered responses cannot protect secrets on the request path. That conclusion creates a deployment gap. The trusted party is usually the provider, but clients and router operators cannot assume providers will add new APIs or infrastructure for them.

\begin{figure}[t]
\centering
\begingroup
\definecolor{crPanel}{RGB}{249,249,249}
\definecolor{crFrame}{RGB}{206,206,206}
\definecolor{crExposed}{RGB}{255,223,223}
\definecolor{crSealed}{RGB}{221,242,223}
\newcommand{\exposed}[1]{\setlength{\fboxsep}{1.2pt}\colorbox{crExposed}{\strut #1}}
\newcommand{\sealed}[1]{\setlength{\fboxsep}{1.2pt}\colorbox{crSealed}{\strut #1}}
\newcommand{\crbad}{\textcolor{red!70!black}{$\times$}}
\newcommand{\crok}{\textcolor{green!50!black}{$\checkmark$}}
\newcommand{\crpanel}[2]{%
  \setlength{\fboxsep}{4pt}\setlength{\fboxrule}{0.3pt}%
  \fcolorbox{crFrame}{crPanel}{%
    \begin{minipage}[t]{0.455\textwidth}
    {\footnotesize\bfseries #1}\par\vspace{3pt}
    {\scriptsize\ttfamily\selectfont\raggedright #2\par}
    \end{minipage}}}
\crpanel{\textcolor{red!50}{Plaintext router}}{%
  TLS terminates on the host\par\vspace{2pt}
  prompt: \exposed{fix the failing build} \crbad\par
  tool call: \exposed{run("npm install lodash")} \crbad\par
  api key: \exposed{sk-ant-a1b2c3d4} \crbad\par
  response: \exposed{assistant text + tool calls} \crbad\par
  \vspace{3pt}\hrule\vspace{3pt}
  the host reads and can rewrite \exposed{every field}}%
\hfill
\crpanel{\textcolor{teal}{Through \sys}}{%
  TLS terminates in the attested enclave\par\vspace{2pt}
  request: \sealed{0x9f2a3c...} (ciphertext) \crok\par
  tool call: \sealed{inside the ciphertext} \crok\par
  credential: \sealed{bound inside the enclave} \crok\par
  response: \sealed{0x7c1e8b...} (ciphertext) \crok\par
  \vspace{3pt}\hrule\vspace{3pt}
  the host sees \sealed{1240 tokens} of usage only}%
\endgroup
\caption{What the untrusted host observes for one coding-agent request. A plaintext router (top) holds every field in the clear; through \sys (bottom) the host sees only ciphertext bound to the attested enclave and usage telemetry.}
\label{fig:contrast}
\end{figure}

\heading{Sealing the router}
We accept that conclusion but move the trusted boundary to a place the client can verify without provider cooperation. Rather than detect a malicious router after the fact, can we remove the operator's ability to misbehave at all? We make the router an attested trust boundary. Only the data plane, the code path that carries the request and response bytes, runs inside a hardware enclave, and the client's TLS session terminates there. The host never holds the interaction in the clear. A small client-side component checks the enclave's measurement and refuses to send the body until the check passes. Authentication, account selection, and billing stay on the untrusted host and reach the enclave over a narrow channel that never carries the body. The provider sees an ordinary request, while the operator still schedules and bills but cannot read or modify the interaction, nor redirect it, since the provider and endpoint are fixed in the attested image.

\heading{One boundary suffices}
These four classes are not independent mechanisms to patch. They share one enabling capability, plaintext access at the router, and removing it collapses the family. A host that cannot see the interaction can neither rewrite a tool call nor its typosquat variant (\acone, \aconea), has no hidden behavior left for trigger-gating to conceal (\aconeb), and cannot scan the traffic for the client's secrets (\actwo), the last without any cooperation from the provider. The host still holds only the gateway credential it issued (\S\ref{sec:threat}).

\heading{Closing the gap}
Attested LLM gateways already exist~\cite{opengradient,redpill,phala}, but they attest a larger or transforming service rather than a faithful relay, and they verify only after the client has released its body. \sys closes both gaps by construction. It releases the body only after the client verifies the measurement and pins the enclave certificate, and it relays the interaction faithfully over a small reproducible trusted base, so the measurement certifies that the bytes were carried unmodified. \S\ref{sec:related} compares the two designs in detail. To our knowledge this is the first minimal trusted-code, reproducible enclave retrofit of a widely deployed open-source router template, evaluated against the attack classes above.

\heading{Contributions} We summarise our contributions as below.
\begin{itemize}[leftmargin=1.2em,itemsep=2pt,topsep=2pt]
\item A provider-transparent router design that closes the full malicious-router taxonomy (\acone, \aconea, \aconeb, \actwo) under a precise malicious-operator model, with stated coverage and scope (\S\ref{sec:threat}, \S\ref{sec:formal}).
\item A minimal trusted-code, confidentiality-first design. Only a faithful passthrough runs in the enclave, which keeps the trusted base small; the rest of the router stays on the untrusted host and reaches the enclave over a channel that never carries the body. The client refuses to release the interaction body until attestation verifies, so there is no after-the-fact check for the client to get wrong (\S\ref{sec:design}).
\item A measurement-equals-audited-code construction. A minimal reproducibly built trusted base and a signed build manifest, with the \textsf{ARPA} (Audit, Reproduce, Pin, Attest) bootstrapping protocol that lets an end user or their coding agent anchor trust without a central auditor (\S\ref{sec:arpa}).
\item An empirical evaluation on four fronts. All four attack classes succeed on a plaintext-access baseline and are blocked by our design; adaptive malicious-router tests and a scoped model in the ProVerif protocol verifier~\cite{proverif} bound the host's remaining influence; three provider-native workloads exercise the verified path; and seeded coding-agent audits check the auditability claim (\S\ref{sec:eval}).
\end{itemize}

\noindent We implement \sys on a widely available cloud enclave platform and report its security properties, provider support, auditability, and performance.

 \section{Technical Warm-ups}
\label{sec:bg}

\subsection{LLM API Routers and Tool-Using Agents}
An agent calls a language model in a loop, advertises tools the model may invoke, executes the tool calls it returns, and feeds the results back~\cite{react}. If a tool call is tampered with, the agent executes the tampered version, and how much harm follows depends on what its tools can reach. A coding agent is the sharpest case. Its tools run on the developer's machine and include shell commands and package installation~\cite{sweagent}, so a tool call it executes carries the same authority as the developer.

Many agents reach a provider through an intermediary that fronts one or more providers behind a single endpoint, an arrangement widely deployed as an LLM API gateway or aggregator~\cite{litellm,openrouter,portkey}. We call it an LLM API router. A client sends provider-format requests to the router under a credential the router issues. The router then authenticates the caller, selects an upstream provider account from a pool it manages, attaches that account's provider credential, forwards the request, relays the streamed response, and records token usage for accounting. Around this path deployments add load balancing and failover across the pool, per-account rate and concurrency limits, health-based selection, and session affinity. They expose every provider through one provider-compatible interface. An unmodified client switches providers by changing only the endpoint~\cite{litellm,portkey}. Routing here means brokering each request to an upstream account, not the model-selection routing that picks which model answers a query.

Brokering this way has one structural consequence the rest of the paper turns on. The router terminates the client's transport-layer security session and opens a separate one to the provider. By construction the full plaintext interaction, including the prompt, the response, and every tool input and output, exists inside the router for the duration of each request, the intended behavior of a terminating reverse proxy. From the outside, the agent cannot tell whether the router returned the provider's exact response or a substitute~\cite{yaim}.

\subsection{Trusted Execution and Remote Attestation}
A hardware trusted execution environment (TEE), or enclave, runs code in a processor-isolated region that even privileged host software, the operating system and the hypervisor included, cannot read or modify~\cite{sgxexplained}. Isolation alone does not tell a remote party which code is running. The platform therefore computes a measurement, a cryptographic hash of the exact image loaded into the enclave, and a hardware root of trust issues an attestation, a signed document that carries the measurement and traces to the platform vendor~\cite{awsnitroattest}. The remote party checks the signature and compares the measurement against the value it expects, learning that a specific image runs inside a genuine enclave. The platform also lets the enclave embed caller-supplied bytes in that document. Binding a fresh nonce and the enclave's transport key this way makes the attestation speak for the channel in use rather than some other session, the technique known as remote-attestation TLS (RA-TLS)~\cite{ratls}. \S\ref{sec:prelim} states this platform as an attested-execution scheme with three security games.

Two properties of this model matter for any system built on it. First, the trusted computing base (TCB), the code that must be correct for a guarantee to hold, spans both the code inside the enclave with the platform root and the relying party's verifier that checks the attestation and pins the measurement~\cite{flicker}. A guarantee is therefore only as strong as both parts, and a smaller enclave is worth the effort. Second, attestation proves code identity, not benignity, and a measurement means known-good code only when it ties back to audited, reproducibly built source~\cite{attestablebuilds}. Two platform limits also bound the model. An enclave that persists sealed state must defend against host rollback to an older valid version. Hardware isolation is not absolute either, since physical and microarchitectural side channels can leak enclave state~\cite{foreshadow,controlledchannel}.

\subsection{Cryptographic Primitives}
\label{sec:prelim}
We fix the cryptographic vocabulary the rest of the paper reuses, abstracting the enclave platform as an attested-execution scheme with three security properties and recalling the standard primitives the bootstrapping protocol relies on. Throughout, $\mathcal{A}$ is a probabilistic polynomial-time adversary, $\lambda$ the security parameter, and $\Adv^{X}\le\negl$ means the advantage is negligible in $\lambda$ for every such $\mathcal{A}$; for a guessing game the advantage is the distinguishing advantage $\big|\Pr[\beta'{=}\beta]-\tfrac12\big|$.

\heading{Attested execution}
We model the enclave platform as an attested-execution scheme, following the standard abstraction of trusted hardware~\cite{passshitramer}. A scheme is a tuple of probabilistic polynomial-time algorithms
\[
\HW=(\mathsf{Setup},\ \mathsf{Load},\ \mathsf{Run},\ \mathsf{Run\&Quote},\ \mathsf{VrfyQuote}).
\]
\begin{itemize}[leftmargin=1.2em,itemsep=2pt,topsep=2pt]
\item $\mathsf{Setup}(1^{\lambda})$ fixes the platform root key and outputs the \mbox{public parameters $\mathit{pms}$}.
\item $\mathsf{Load}(\mathit{pms},Q)$ creates an enclave running program $Q$, measures the loaded image to $m$, and returns a handle.
\item $\mathsf{Run}(\mathit{hdl},\mathit{in})$ runs $Q$ on an input.
\item $\mathsf{Run\&Quote}(\mathit{hdl},\mathit{in})$ additionally returns a quote $\rho$ binding the measurement, the input, the output, and caller-supplied bytes under the platform root.
\item $\mathsf{VrfyQuote}(\rho)$ checks a quote against the platform root.
\end{itemize}
Binding the enclave's transport key into the caller-supplied bytes makes a quote speak for the live session, the technique of attested transport-layer security. The scheme has three properties, each carrying one guarantee.

\begin{definition}[Execution integrity]
\label{def:exeinty}
$\Adv^{\ExeInty}_{\mathcal{A},\HW}\le\negl$, where $\mathcal{A}$ has oracle access to the
loaded enclave and wins by forcing an accepted output or streamed transcript
$\mathit{out}'\neq\mathsf{Run}(\mathit{hdl},\mathit{in})$ for some input. The loaded program
runs faithfully on its inputs, its whole input-output transcript included; this gives
execution integrity, not confidentiality.
\end{definition}

\begin{definition}[Memory confidentiality]
\label{def:iso}
$\Adv^{\Iso}_{\mathcal{A},\HW}\le\negl$ in the guessing game where $\mathcal{A}$ submits an
admissible pair of secret in-enclave inputs $s_0,s_1$ with $\leak(s_0)=\leak(s_1)$, the
challenger runs $Q$ on $s_\beta$, and $\mathcal{A}$ sees the entire host view. The host learns nothing of the enclave's secret memory beyond $\leak$, the isolation
property; this gives confidentiality.
\end{definition}

\begin{definition}[Attestation unforgeability]
\label{def:raunf}
$\Adv^{\RAunf}_{\mathcal{A},\HW}\le\negl$, where $\mathcal{A}$, given the public parameters and
a $\mathsf{Run\&Quote}$ oracle, wins by outputting an accepting quote on a
measurement-input-output tuple no enclave ever produced. No party without the platform root
can forge a quote.
\end{definition}

\heading{Standard primitives}
We recall four standard primitives:
\begin{itemize}[leftmargin=1.2em,itemsep=2pt,topsep=2pt]
\item A signature scheme is existentially unforgeable under chosen-message attack when no $\mathcal{A}$ with a signing oracle forges a valid signature on a fresh message \mbox{($\Adv^{\mathrm{euf\text{-}cma}}\le\negl$)}.
\item A measurement function $\mathsf{Build}$ maps a source or a build manifest under a fixed recipe $R$ to a measurement $m$ and is collision-resistant ($\Adv^{\mathrm{cr}}\le\negl$) when finding two distinct inputs with the same measurement is hard.
\item An authenticated channel, realized by authenticated encryption with associated data, delivers each message intact and confidential to its authenticated endpoint or signals failure, with distinguishing advantage $\Adv^{\mathrm{ch}}\le\negl$ from an ideal secure channel. The enclave-to-provider channel is endpoint-authenticated under the public certificate authorities, which we take as a trusted root; a certificate mis-issued for a pinned hostname is outside the model.
\item An append-only transparency log, under its public key $\mathit{pk}_{\mathrm{log}}$, binds its contents to a signed tree head $\eta$, proves a recorded entry with an inclusion proof $\pi$, and relates an earlier head $\eta_{-}$ to a later one with a consistency proof $\tau$. It has inclusion soundness and consistency ($\Adv^{\mathrm{log}}\le\negl$), so no $\mathcal{A}$ makes an honest verifier accept an inclusion proof for an entry the log does not record, nor makes two honest readers accept divergent contents at the same position.
\end{itemize}
 \section{System and Threat Model}
\label{sec:threat}

\subsection{System Model}
We target a self-hosted, single-operator router, where one entity owns the upstream provider accounts and the gateway logic. Three parties take part. A client runs a tool-using agent and a verifier that checks the router before trusting it. The operator runs the router on an untrusted host it fully controls. The providers are ordinary HTTPS upstreams the operator selects among. Several clients may share one operator, and the adversary is the operator.

The security question is whether clients can distinguish an operator running audited, attested code from one that merely asks to be trusted. Attestation gives clients that choice. A client can require the expected measurement before releasing plaintext, and organizations can make that check a routing requirement. An operator that publishes a reproducible measurement gives clients a verifiable trust signal, and operators without one stay indistinguishable from plaintext routers.

\begin{table*}[t]
\centering
\caption{What each trust boundary carries. The host-facing control channel (rows 2, 3, and 5) carries eighteen fields across two calls; no field can name a network address or carry body bytes.}
\label{tab:interface}
\footnotesize
\renewcommand{\arraystretch}{1.16}
\begin{tabular}{@{}lll@{}}
\toprule
\multicolumn{1}{@{}c}{\textbf{Boundary}} & \multicolumn{1}{c}{\textbf{Carries}} & \multicolumn{1}{c@{}}{\textbf{Never carries}} \\
\midrule
Sidecar $\to$ enclave & attested session; body plaintext after verification succeeds & plaintext before verification \\
Enclave $\to$ host (control) & gateway credential; model, platform, and session labels; failed accounts & body bytes; any destination \\
Host $\to$ enclave (control) & provider, policy, and account names; accounting label; provider credential & address or path; body transform \\
Enclave $\to$ provider & verbatim body; provider credential; allowlisted headers & gateway credential; host-chosen destination \\
Enclave $\to$ host (telemetry) & token counts, status, latency & prompt, response, tool content \\
\bottomrule
\end{tabular}
\end{table*}

\subsection{Threat Model}
Our adversary is a malicious router operator~\cite{yaim} that controls the untrusted host on which the router runs. It runs arbitrary code in the router process, reads and writes the router's datastores, observes and forwards all network traffic, and chooses which upstream account serves a request. Its goal is any of the four attack outcomes of \S\ref{sec:intro}, which Figure~\ref{fig:threat} places on the router's two plaintext paths.

\begin{figure}[b]
\centering
\newcommand{\actag}[1]{{\footnotesize\bfseries\sffamily\color{crExposedLine}#1}}%
\newcommand{\acsub}[1]{{\scriptsize\color{black!62}#1}}%
\resizebox{\columnwidth}{!}{%
\begin{tikzpicture}[
  font=\small,
  box/.style={rectangle, rounded corners=3pt, align=center, line width=1pt,
              draw=black!55, fill=black!4, inner sep=3pt, minimum height=1.5cm},
  mitm/.style={box, draw=crExposedLine, fill=crExposed, line width=1.4pt},
  data/.style={-{Stealth[length=5pt 1.4,width'=0pt 0.62]}, line width=1.0pt, color=black!85},
  lead/.style={-{Stealth[length=3.5pt 1.1,width'=0pt 0.6]}, line width=0.8pt, color=crExposedLine,
               dash pattern=on 0.5pt off 2pt},
  mlab/.style={font=\footnotesize, text=black!80, align=center, inner sep=1.5pt},
  callout/.style={rectangle, rounded corners=2.5pt, draw=crExposedLine, fill=crExposed,
                  align=left, inner sep=5pt, line width=0.9pt},
]
\def\yT{0.40}\def\yB{-0.40}
\node[box, minimum width=2.4cm] (ag) at (0,0)
  {\textbf{Agent}\\[1pt]{\scriptsize developer authority}};
\node[mitm, minimum width=3.1cm] (rt) at (4.55,0)
  {\textbf{Malicious router}\\[2pt]{\scriptsize terminates TLS;}\\[-1pt]{\scriptsize holds plaintext}};
\node[box, minimum width=2.4cm] (pv) at (9.1,0)
  {\textbf{Provider}\\[1pt]{\scriptsize honest}};
\draw[data] ([yshift=\yT cm]ag.east) -- ([yshift=\yT cm]rt.west);
\node[mlab,anchor=south] at (2.1,0.45) {prompts, tools,\\ secrets};
\draw[data] ([yshift=\yB cm]rt.west) -- ([yshift=\yB cm]ag.east);
\node[mlab,anchor=north] at (2.1,-0.45) {tool calls};
\draw[data] ([yshift=\yT cm]rt.east) -- ([yshift=\yT cm]pv.west);
\node[mlab,anchor=south] at (7.0,0.45) {fresh TLS};
\draw[data] ([yshift=\yB cm]pv.west) -- ([yshift=\yB cm]rt.east);
\node[mlab,anchor=north] at (7.0,-0.45) {response};
\node[callout,anchor=north] (ac) at (4.55,-1.55)
  {{\actag{AC-1}\, \acsub{rewrite a tool call}}\\[2pt]
   {\actag{AC-1.a}\, \acsub{swap in a typosquat package}}\\[2pt]
   {\actag{AC-1.b}\, \acsub{fire only on a trigger}}\\[2pt]
   {\actag{AC-2}\, \acsub{scan and exfiltrate secrets}}};
\draw[lead] (rt.south) -- (ac.north);
\end{tikzpicture}%
}
\caption{The attack position. All four attack classes start from the terminating router's plaintext access, response-side rewrites (\acone, \aconea, \aconeb) on one path and request scanning (\actwo) on the other.}
\label{fig:threat}
\end{figure}

We assume the provider and the model are honest, the boundary the measurement study draws~\cite{yaim}. The provider need not attest, sign responses, expose a new API, or run new infrastructure, and remains an ordinary HTTPS upstream selected by the router. Defending against a malicious provider, or a model that emits a malicious tool call on its own, is out of scope for both works. We assume the enclave platform's hardware root of trust and attestation signing are sound, and place physical and microarchitectural side channels against the enclave out of scope.

\subsection{Security Goals}
\label{sec:goals}
\sys aims to let a malicious host carry the interaction without learning its body, altering it, or steering it off an approved path. Body-plaintext confidentiality hides the interaction body from the host and confines it to an approved destination. The gateway credential the caller authenticates with is disclosed to the host by design in the single-operator deployment. Router execution integrity fixes the code that touches the body to the attested image. Transport-level response integrity protects the bytes among the verified client, the enclave, and a validated provider. This covers the transported bytes rather than full semantic routing integrity, since account selection stays with the untrusted host.

\heading{Leakage profile}
Confidentiality holds modulo an explicit profile evaluated on the realized interaction transcript,
\begin{multline*}
\leak=(\,|b|,\ |r|,\ \mathrm{timing},\ \mathrm{cadence},\ \mathrm{status},\ \mathrm{abort/retry},\ g,\\
\mathrm{acct},\ \mathrm{prov},\ \mathrm{SNI},\ \mathrm{IP},\ \mathrm{rec\text{-}sizes},\ \mathrm{usage}\,),
\end{multline*}
splitting into an input-determined part fixed by the request, namely the body length $|b|$, the
request timing, the gateway credential $g$, the account and provider identifiers, the server-name
indication, the upstream address, and the request record sizes, and an interaction-determined part
fixed by the honest provider's reply, namely the response length $|r|$, the response timing, the
streaming cadence, the completion status, the abort-and-retry pattern, the response record sizes,
and the token usage. Treating the provider as a fixed oracle, $\leak$ is a function of the body, so
a pair is \emph{admissible} when its two bodies induce the same leakage. Without padding or traffic
shaping the design hides body content only up to $\leak$, not length or timing.

Four operational games specify body-confidential routing against a malicious host, each naming only the specific host strategy it rules out.

\begin{definition}[Confidentiality, equal-leakage]
\label{def:conf}
The host $\mathcal{A}$ chooses which image to load and carries a session on $b_\beta$, for a hidden bit $\beta$ and an admissible pair $b_0,b_1$; the client sidecar's attest-and-verify step releases plaintext only when the live enclave binds the pinned measurement and its transport key. For the host's guess $\beta'$,
\[
\Adv^{\mathrm{conf}}_{\mathcal{A},\scheme}
  = \bigl|\Pr\bigl\{\beta'{=}\beta\bigr\}-\tfrac12\bigr|
  \le \negl(\lambda).
\]
The host cannot tell which body was sent, and its freedom to load a deviating image or forge a binding is an event the experiment exposes.
\end{definition}

\begin{definition}[Destination authority]
\label{def:dest}
For a client body $b$ and application destinations $d$, provider hostname and path pairs,
\[
\Adv^{\mathrm{dest}}_{\mathcal{A},\scheme}
  = \Pr\bigl\{\,b \text{ in plaintext reaches } d\notin\Pol\,\bigr\}
  \le \negl(\lambda).
\]
No host strategy steers a plaintext body outside the baked set.
\end{definition}

\begin{definition}[Faithful relay]
\label{def:faithful}
For the submitted body $b$, the response $r$ the chosen provider returned, the body $b'$ a provider receives, and the response $r'$ the client accepts,
\[
\Adv^{\mathrm{int}}_{\mathcal{A},\scheme}
  = \Pr\bigl\{\,b'\neq b \,\vee\, r'\neq r\,\bigr\}
  \le \negl(\lambda).
\]
The host may force a request to fail, but cannot make a client accept a tampered success.
\end{definition}

\begin{definition}[Streaming integrity]
\label{def:stream}
\[
\Adv^{\mathrm{str}}_{\mathcal{A},\scheme}
  = \Pr\bigl\{\,\text{truncated response accepted}\,\bigr\}
  \le \negl(\lambda).
\]
A response completes only when its end-of-stream marker arrives intact, so a cut stream yields a rejection.
\end{definition}

These four games state that a malicious host learns the body only through $\leak$, cannot alter a delivered or accepted body, and cannot steer plaintext outside $\Pol$.
 \section{Our Solution: \sys}
\label{sec:design}

\sys turns the router into an attested trust boundary by splitting it along the line where plaintext lives, across the trust domains a request crosses (Figure~\ref{fig:arch}).

\begin{figure}[!]
\centering
\definecolor{bandA}{HTML}{F4F4F4}%
\resizebox{\columnwidth}{!}{%
\begin{tikzpicture}[
  font=\small, yscale=0.82,
  hdr/.style={rectangle, rounded corners=3pt, align=center, inner sep=2pt,
              minimum width=2.1cm, minimum height=0.95cm},
  lifeline/.style={dashed, line width=0.7pt, color=black!40},
  data/.style={-{Stealth[length=5pt 1.4,width'=0pt 0.62]}, line width=1.0pt, color=black!85},
  ctrl/.style={-{Stealth[length=4pt 1.2,width'=0pt 0.6]}, line width=0.9pt, color=black!52,
               dash pattern=on 3pt off 2pt},
  att/.style={-{Stealth[length=4pt 1.2,width'=0pt 0.6]}, line width=0.9pt, color=black!72,
              dash pattern=on 0.6pt off 2pt},
  selfcall/.style={-{Stealth[length=3pt 1.0,width'=0pt 0.5]}, line width=0.8pt,
                   color=black!72, rounded corners=2pt},
  mlab/.style={font=\footnotesize, text=black!88, inner sep=2pt},
  ctag/.style={font=\scriptsize\itshape, text=crSealedLine, inner sep=1.5pt},
  plab/.style={font=\footnotesize\bfseries, text=black!45},
  snum/.style={circle, fill=black!78, text=white, font=\scriptsize\bfseries,
               inner sep=0pt, minimum size=3.5mm},
]
\def\xS{0}\def\xH{2.5}\def\xE{5.0}\def\xP{7.5}
\def\FL{4.7}\def\FR{5.3}
\def\xL{-1.3}\def\xR{8.6}
\fill[bandA] (\xL,-0.72) rectangle (\xR,-2.78);
\fill[bandA] (\xL,-4.22) rectangle (\xR,-6.12);
\fill[bandA] (\xL,-8.42) rectangle (\xR,-9.50);
\node[plab,anchor=west] at (\xL+0.05,-1.75) {Attest};
\node[plab,anchor=west] at (\xL+0.05,-3.50) {Release};
\node[plab,anchor=west] at (\xL+0.05,-5.17) {Authorize};
\node[plab,anchor=west] at (\xL+0.05,-7.30) {Relay};
\node[plab,anchor=west] at (\xL+0.05,-8.96) {Report};
\node[hdr,draw=black!55,fill=black!4,line width=1pt] (hS) at (\xS,0)
  {\textbf{Client sidecar}\\[1pt]{\fontsize{7.5}{8.5}\selectfont\color{black!55}trusted}};
\node[hdr,draw=crExposedLine,fill=crExposed,line width=1pt] (hH) at (\xH,0)
  {\textbf{Host}\\[1pt]{\fontsize{7.5}{8.5}\selectfont\color{crExposedLine}untrusted}};
\node[hdr,draw=crSealedLine,fill=crSealed,line width=1.2pt] (hE) at (\xE,0)
  {\textbf{Enclave}\\[1pt]{\fontsize{7.5}{8.5}\selectfont\color{crSealedLine}attested}};
\node[hdr,draw=black!55,fill=black!4,line width=1pt] (hP) at (\xP,0)
  {\textbf{Provider}\\[1pt]{\fontsize{7.5}{8.5}\selectfont\color{black!55}honest}};
\draw[lifeline] (\xS,-0.55) -- (\xS,-9.45);
\draw[lifeline] (\xH,-0.55) -- (\xH,-9.45);
\draw[lifeline] (\xE,-0.55) -- (\xE,-9.45);
\draw[lifeline] (\xP,-0.55) -- (\xP,-9.45);
\fill[crSealed] (\xE-0.30,-9.45) rectangle (\xE+0.30,-3.05);
\draw[crSealedLine,line width=1pt] (\xE-0.30,-9.45) rectangle (\xE+0.30,-3.05);
\node[font=\scriptsize\bfseries,text=crSealedLine,anchor=south] at (\xE,-2.97)
  {plaintext only here};
\draw[att] (\xS,-1.25) -- (\xE,-1.25);
\node[mlab,above=0pt] at (\xH,-1.25) {fresh nonce $n$};
\node[snum] at (\xS,-1.25) {1};
\draw[att] (\xE,-2.10) -- (\xS,-2.10);
\node[mlab,above=0pt] at (\xH,-2.10) {quote binds $m^{\ast},k,n$};
\draw[selfcall] (\xS,-2.42) -- (0.5,-2.42) -- (0.5,-2.72) -- (\xS,-2.72);
\node[mlab,anchor=west] at (\xS+0.66,-2.56) {verify $m^{\ast}$ (\textbf{fail-closed})};
\draw[data] (\xS,-3.38) -- (\FL,-3.38);
\node[mlab,above=0pt] at (\xH,-3.38) {request $(b,g)$, sealed to $k$};
\node[snum] at (\xS,-3.38) {2};
\node[ctag] at (\xH,-3.62) {ciphertext};
\draw[ctrl] (\FL,-4.62) -- (\xH,-4.62);
\node[mlab,above=0pt] at (3.75,-4.62) {authorize $g$};
\node[snum] at (\xE,-4.62) {3};
\draw[ctrl] (\xH,-5.62) -- (\FL,-5.62);
\node[mlab,above=0pt] at (3.75,-5.62) {policy $p$, $c$};
\node[snum] at (\xH,-5.62) {4};
\draw[data] (\FR,-6.55) -- (\xP,-6.55);
\node[mlab,above=0pt] at (6.25,-6.55) {verbatim $b,c$};
\node[snum] at (\xE,-6.55) {5};
\draw[data] (\xP,-7.45) -- (\FR,-7.45);
\node[mlab,above=0pt] at (6.25,-7.45) {response $r$};
\draw[data] (\FL,-8.12) -- (\xS,-8.12);
\node[mlab,above=0pt] at (\xH,-8.12) {response $r$, attested session};
\node[ctag] at (\xH,-8.38) {ciphertext};
\draw[ctrl] (\FL,-9.02) -- (\xH,-9.02);
\node[mlab,above=0pt] at (3.75,-9.02) {usage};
\node[snum] at (\xE,-9.02) {6};
\def\yLg{-10.05}
\draw[data] (0.2,\yLg) -- (0.85,\yLg);
\node[anchor=west,font=\footnotesize,text=black!75] at (0.95,\yLg) {body / data};
\draw[ctrl] (3.0,\yLg) -- (3.65,\yLg);
\node[anchor=west,font=\footnotesize,text=black!75] at (3.75,\yLg) {control metadata};
\draw[att] (6.5,\yLg) -- (7.15,\yLg);
\node[anchor=west,font=\footnotesize,text=black!75] at (7.25,\yLg) {attestation};
\end{tikzpicture}%
}
\caption{The \sys architecture across three trust domains. Numbered steps trace a request's lifecycle (Algorithms~\ref{alg:bootstrap}--\ref{alg:relay}).}
\label{fig:arch}
\end{figure}

\subsection{Overview}
What belongs inside the enclave fixes everything else, and the two extremes fail oppositely. Placing the entire transforming router inside protects the body but swells the trusted base to the whole application; its measurement then names a large, fast-moving image that no end user can tie to audited source (\S\ref{sec:arpa}). Keeping the body path on the host and attesting only a side component leaves the host terminating the client's session, so the plaintext access behind every attack survives. \sys takes the point between them, a construction of probabilistic polynomial-time algorithms
\[
\scheme=(\mathsf{Publish},\ \mathsf{Audit},\ \mathsf{Pin},\ \mathsf{Attest},\ \mathsf{Relay},\ \mathsf{Vrfy}).
\]
\begin{itemize}[leftmargin=1.2em,itemsep=2pt,topsep=2pt]
\item $\mathsf{Publish}$, run by the operator once per release, publishes an audited, reproducible build.
\item $\mathsf{Audit},\mathsf{Pin}$, run once per client, audit the published build and pin its measurement.
\item $\mathsf{Attest},\mathsf{Vrfy}$, run per session, attest the live enclave to the client sidecar before any plaintext leaves the client.
\item $\mathsf{Relay}$, run per request, relays the interaction verbatim to a fixed provider and reports only usage to the host.
\end{itemize}

\subsection{Minimal Trusted Base}
The trusted base should contain only what confidentiality and integrity require, because a smaller base is both a smaller attack surface and, per \S\ref{sec:arpa}, a feasible audit target. We therefore keep three things in the enclave and nothing else, namely termination of the client session, faithful relay of the request and response, and the set of legitimate upstream destinations. Everything else stays on the host, including authentication, account selection and scheduling, credential storage, usage accounting, and the management interface. The host reaches the enclave over a narrow control channel, fixed in Table~\ref{tab:interface}, with two calls. One authenticates the caller and returns the routing decision, in which the destination appears only as a provider and a policy name. The other reports usage after the request completes and carries only numeric counters that cannot encode text. No field on either call can name a network address or carry a byte of the body. The enclave keeps no state across requests, so it writes nothing to disk and rollback attacks against stateful enclaves do not arise.

\subsection{Anchoring the Measurement to Audited Source}
\label{sec:arpa}
Bootstrapping comes first in time. Attestation proves which code an enclave runs, so a measurement is trustworthy only when that code ties back to audited source. The operator commits a release with $\mathsf{Publish}$, and each client runs $\mathsf{Audit},\mathsf{Pin}$ (Algorithm~\ref{alg:bootstrap}) to audit and reproducibly rebuild that source, pinning the resulting measurement $m^{\ast}$ only when the rebuild matches the manifest (lines~\ref{ln:rebuild}--\ref{ln:pin}) and the manifest's entry verifies against a public transparency log (lines~\ref{ln:incl}--\ref{ln:cons}). The pinned measurement is therefore one the operator has committed to in the open, and a forked or rolled-back log fails closed. Pinning audited source closes the malicious-operator case by construction. The audit rests on review rather than proof. It raises the bar against a malicious author without ruling that case out. The code that first receives the plaintext is the code the client audited, and a measurement one client trusts is one every client can see in the public log. Because the trusted code is a small faithful passthrough, one end user or their coding agent can run this audit without a central auditor.

\begin{algorithm}[!ht]
\caption{Bootstrapping. The operator runs $\mathsf{Publish}$ once per release; each client then runs $\mathsf{Audit},\mathsf{Pin}$ once.}
\label{alg:bootstrap}
\begin{algorithmic}[1]
\Require $s$ source, $R$ recipe, $\mathit{mf}$ manifest, $\sigma$ signature, $\pi$ inclusion proof, $\eta$ tree head, $\tau$ consistency proof, $\eta_{-}$ last-seen head
\Procedure{$\mathsf{Publish}$}{$s,R$}\Comment{operator}
  \State $m \gets \mathsf{Build}(s,R)$
  \State $\mathit{mf} \gets (s,R,m)$;\ \ $\sigma \gets \mathsf{Sign}_{\mathit{sk}_{\mathcal{O}}}(\mathit{mf})$
  \State $(\pi,\eta) \gets \mathsf{LogAppend}(\mathit{mf})$
  \State \Return $(\mathit{mf},\sigma,\pi,\eta)$
\EndProcedure
\Procedure{$\mathsf{Audit},\mathsf{Pin}$}{$\mathit{mf},\sigma,\pi,\eta,\tau$}\Comment{client}
  \State $(s,R,m) \gets \mathit{mf}$
  \State $\mathbf{assert}\ \mathsf{SigVrfy}_{\mathit{pk}_{\mathcal{O}}}(\sigma;\mathit{mf}) \wedge \Faithful(s)$
  \State $\mathbf{assert}\ \mathsf{VrfyIncl}_{\mathit{pk}_{\mathrm{log}}}(\mathit{mf};\pi,\eta)$ \label{ln:incl}
  \State $\mathbf{assert}\ \mathsf{VrfyCons}_{\mathit{pk}_{\mathrm{log}}}(\eta_{-},\eta;\tau)$ \label{ln:cons}
  \State $\mathbf{assert}\ \mathsf{Build}(s,R){=}m$ \label{ln:rebuild}
  \State $m^{\ast} \gets m$;\ \ $\eta_{-} \gets \eta$ \label{ln:pin}
\EndProcedure
\end{algorithmic}
\end{algorithm}

\subsection{Fail Closed Before Releasing Plaintext}
\label{sec:failclosed}
With a measurement pinned, each session begins by verifying the live enclave. The client points its agent at a thin local proxy, the sidecar, whose $\mathsf{Vrfy}$ procedure (Algorithm~\ref{alg:attest}, lines~\ref{ln:vq}--\ref{ln:release}; step~1 of Figure~\ref{fig:arch}) enforces confidentiality. It releases the request body and gateway credential (line~\ref{ln:release}) only after the quote passes the platform-root and debugging checks (line~\ref{ln:vq}) and binds the pinned measurement $m^{\ast}$, the fresh nonce, and the live session's certificate (line~\ref{ln:vbind}), and it forwards nothing on any failure. Because plaintext reaches only a verified enclave, the host never sees the body.

This ordering is the load-bearing choice. It separates two postures, refuse-before-send and verify-after-receive. A verify-after-receive router has already been given the plaintext. A client that follows a multi-step manual check can get a step wrong and believe it verified more than it did. \sys gives the client exactly one success path and makes releasing plaintext depend on it. There is no partial receipt to misread and no after-the-fact step to skip, and no signature is carried inside the response it is meant to protect. \S\ref{sec:related} returns to why this removes a class of verification mistakes that a verify-after-receive design invites.

\begin{algorithm}[t]
\caption{$\mathsf{Attest},\mathsf{Vrfy}$, the per-session attestation handshake (step~1 of Figure~\ref{fig:arch}).}
\label{alg:attest}
\begin{algorithmic}[1]
\Require $n$ nonce, $\rho$ quote, $b$ request body, $g$ gateway credential, $m^{\ast}$ pinned measurement, $\mathit{cert}_{\mathcal{E}}$ channel certificate
\Procedure{$\mathsf{Attest}$}{$n$}\Comment{enclave}
  \State $\rho \gets \mathsf{Run\&Quote}(\mathit{hdl},(n,H(\mathit{cert}_{\mathcal{E}})))$
  \State \Return $\rho$
\EndProcedure
\Procedure{$\mathsf{Vrfy}$}{$\rho,n,b,g$}\Comment{client sidecar}
  \State $\mathbf{assert}\ \mathsf{VrfyQuote}_{\mathit{pk}_{\mathrm{root}}}(\rho) \wedge \neg\mathsf{dbg}(\rho)$ \label{ln:vq}
  \State $\mathbf{assert}\ m_{\rho}{=}m^{\ast} \wedge n_{\rho}{=}n \wedge h_{\rho}{=}H(\mathit{cert}_{\mathcal{E}})$ \label{ln:vbind}
  \State $\mathsf{send}_{\mathit{cert}_{\mathcal{E}}}(b,g)$ \label{ln:release}
\EndProcedure
\end{algorithmic}
\end{algorithm}

Binding integrity to the live session rather than to a receipt also protects streamed responses byte by byte. A coding agent consumes a streamed response incrementally and may act on an early tool-call chunk before the response completes. An integrity check computed at end of stream arrives too late to gate that action. Because \sys carries every byte inside the attested session, each chunk is integrity-protected as it arrives. An agent that acts on an early chunk acts on bytes the host could neither read nor alter.

\subsection{The In-Enclave Relay}
\label{sec:relay}
Once a session is verified, the released body reaches $\mathsf{Relay}$ (Algorithm~\ref{alg:relay}, steps~3--6 of Figure~\ref{fig:arch}), the only code that ever holds the plaintext (line~\ref{ln:dec}). The relay performs no semantic transformation. It forwards the request body to the provider unchanged under the host-selected credential, dropping the caller's gateway credential and forwarding only an allowlisted set of headers (line~\ref{ln:fwd}), relays the response bytes unchanged (line~\ref{ln:stream}), and extracts token usage for the host to bill. It does not rewrite tool calls, reshape multimodal content, convert between protocols, or log bodies. From the provider's perspective the enclave is the router endpoint making the request the router would have made. We call this a faithful passthrough. The end-of-stream marker is forwarded only on upstream completion (line~\ref{ln:eos}), so a truncated upstream response reaches the client visibly incomplete.

Faithfulness is a design commitment that serves three ends. It preserves agent semantics, because a coding agent depends on the provider's exact tool calls and streamed structure, and any rewrite risks breaking the loop. It keeps the trusted base small, because a verbatim relay is far less code than a transformer. And it closes a gap that transformation opens. If the attested code reshapes content, attestation and any response signature prove only that the mediating code ran, not that the client's exact request reached the provider and the provider's exact response reached the client. A faithful passthrough makes the attested code running equivalent to the bytes being carried unmodified, the property the client needs. The equivalence rests on the audit of \S\ref{sec:arpa} and the conformance tests of \S\ref{sec:eval}. A verbatim relay also keeps the host from altering any byte between the verified enclave and the provider, and the client receives the provider's exact response.

\begin{algorithm}[t]
\caption{$\mathsf{Relay}$, the enclave data path, run per request.}
\label{alg:relay}
\begin{algorithmic}[1]
\Require $\mathit{hdl}$, $\mathit{ct}$, $k$, $\Pol$, accounting label $\ell$, failover bound $\beta$
\Ensure ciphertext response stream, usage report
\State $(b,g) \gets \mathsf{Dec}_{k}(\mathit{ct})$ \label{ln:dec}
\State $F \gets \emptyset$
\Repeat
  \State $(\mathit{prov},p,c) \gets \mathsf{Authorize}(g,\ell,F)$
  \State $d \gets \Pol[p]$ \label{ln:resolve}
  \State $\mathbf{assert}\ d{\neq}\bot$ \label{ln:reject}
  \State $\mathit{ok} \gets \mathsf{send}_{d}(b,\,\mathrm{hdr}\oplus c\ominus g)$ \label{ln:fwd}
  \If{$\neg\mathit{ok}$}
    \State $F \gets F \cup \{\mathit{prov}\}$
  \EndIf
\Until{$\mathit{ok} \;\vee\; |F|{=}\beta$}
\ForAll{$r_i \in \mathsf{strm}(r)$}
  \State $\mathbf{emit}\ \mathsf{Enc}_{k}(r_i)$ \label{ln:stream}
\EndFor
\If{$\mathsf{eos}(r)$}
  \State $\mathbf{emit}\ \mathsf{eos}(r)$ \label{ln:eos}
\EndIf
\State $\mathbf{report}\ \mathsf{use}(r)$
\end{algorithmic}
\end{algorithm}

\subsection{Enclave-Owned Destination}
\label{sec:dest}
The host names the route but cannot choose the address. The enclave owns the set of legitimate destinations $\Pol$, baked into the measured image, and the host's routing decision references a provider and an endpoint policy $p$ by name only (step~4 of Figure~\ref{fig:arch}). The relay resolves that name to a fixed destination $d=\Pol[p]$ the host cannot influence (Algorithm~\ref{alg:relay}, line~\ref{ln:resolve}), and it rejects a decision naming anything outside $\Pol$ (line~\ref{ln:reject}). Plaintext therefore leaves only over a session the enclave authenticates for a policy hostname, and the host cannot redirect it to an address of its choosing. For a first-party upstream the enclave pins the provider hostname and validates its certificate against the public certificate authorities, since it cannot attest a provider that is not an enclave. These trust roots live in the measured image the client verifies. When the next hop is another confidential router, the enclave can instead require attested transport and an allowlisted next-hop measurement, the attested-upstream direction \S\ref{sec:discussion} develops as future work. \section{Formal Analysis}
\label{sec:formal}

The guarantees stated plainly in \S\ref{sec:design}, body confidentiality (\S\ref{sec:failclosed}), faithful relay and response integrity (\S\ref{sec:relay}), and destination authority (\S\ref{sec:dest}), are ordering and binding claims about what must already have happened before plaintext is released, dispatched toward a provider, or accepted back. We make each precise with a game-based reduction that bounds the data-path adversary's advantage under standard cryptographic assumptions, and two paper-and-pencil lemmas anchor the runtime measurement to the source the client audited (\S\ref{sec:arpa}) and characterize what an honest verifier accepts. A machine-checked symbolic model independently cross-checks the protocol ordering and falsifies each check removal; we give it as supporting analysis in the supplementary material, since its secrecy property overlaps the confidentiality theorem rather than reproving it. The full proofs are there as well.

\subsection{Reduction Theorems}
\label{sec:formal-thms}

The analysis is over the construction $\scheme$ of \S\ref{sec:design}, with $\mathsf{Relay}$ the in-enclave data path of Algorithm~\ref{alg:relay} and $Q$ its loaded program. Each theorem names only the assumptions its property consumes, drawn from the attested-execution scheme $\HW$ and primitives of \S\ref{sec:prelim}, the faithfulness predicate below, and the leakage profile $\leak$, and bounds the corresponding advantage by a sum of the advantages an adversary would have to win to break it. Existential unforgeability and collision resistance do not appear here; they enter only the anchoring lemma below, which justifies treating $Q$ as the audited one.

\heading{The faithfulness predicate}\label{sec:faithful}
The relay protects the body only because the loaded program forwards it without leaking it; we state this as a predicate and decompose it into three machine-checkable parts. A loaded program $Q$ is faithful when
\begin{multline*}
\Faithful(Q)=\NonInt(Q,\leak)\\
\wedge\ \Verb(Q)\ \wedge\ \DestPol(Q,\Pol).
\end{multline*}
\begin{itemize}[leftmargin=1.2em,itemsep=2pt,topsep=2pt]
\item Host non-interference $\NonInt(Q,\leak)$ requires that every host-observable output of $Q$ be a function of $\leak$ alone, leaking no body-derived information through any host-facing channel, including logs and error or abort messages, outbound transport headers and request-line fields, metrics and control-message fields, name resolution and server-name indication, and transport framing, flush schedule, connection reuse, and retry behavior.
\item Verbatim relay $\Verb(Q)$ requires that $Q$ forward the request body and return the response body byte for byte, including every streamed chunk through the end-of-stream marker, with no transformation path.
\item Destination policy $\DestPol(Q,\Pol)$ requires that $Q$ resolve every destination inside the baked set $\Pol$, taking from the host a policy name and never a network address or path.
\end{itemize}
Each enumerated channel maps to a concrete evidence item \S\ref{sec:eval} supplies, and full machine-checked verification of $\NonInt$ is future work. We write $\Faithful(s)$ for $\Faithful(Q)$ on the program $Q$ that source $s$ builds to under the fixed recipe.

\begin{theorem}[Confidentiality]
\label{thm:conf}
Consider the equal-leakage confidentiality experiment in which the host chooses which image
to load and the client sidecar runs the attest-and-verify step before any plaintext is
released. Assume $\HW$ has memory confidentiality with respect to $\leak$ and execution
integrity, has remote-attestation unforgeability, the client-to-enclave and enclave-to-provider
channels are secure channels, and the loaded data path $Q$ satisfies $\NonInt(Q,\leak)$. Then
\[
\begin{aligned}
\Adv^{\mathrm{conf}}_{\mathcal{A},\scheme}(\lambda)
&\le \Adv^{\Iso}_{\mathcal{A},\HW}(\lambda)+\Adv^{\ExeInty}_{\mathcal{A},\HW}(\lambda)\\
&\phantom{{}\le{}}+\Adv^{\RAunf}_{\mathcal{A},\HW}(\lambda)+\Adv^{\mathrm{ch}_{\mathrm{client}}}_{\mathcal{A}}(\lambda)\\
&\phantom{{}\le{}}+\Adv^{\mathrm{ch}_{\mathrm{prov}}}_{\mathcal{A}}(\lambda).
\end{aligned}
\]
\end{theorem}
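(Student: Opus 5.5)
We prove Theorem~\ref{thm:conf} with a standard sequence of hybrid games $G_0,\dots,G_5$. $G_0$ is the equal-leakage experiment of Definition~\ref{def:conf} with hidden bit $\beta$. $G_5$ is a game in which the host's view is a function of $\leak(b_\beta)$ alone. Because the pair is admissible, $\leak(b_0)=\leak(b_1)$, so in $G_5$ the guess $\beta'$ is independent of $\beta$ and $\Pr[\beta'{=}\beta]=\tfrac12$. The bound then follows by the triangle inequality over the five game hops. Each hop charges exactly one term of the stated sum, which fixes the order of the hops.

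\emph{Hop 1 (attestation).} In $G_1$ we abort whenever the sidecar's $\mathsf{Vrfy}$ (Algorithm~\ref{alg:attest}) accepts a quote $\rho$ on a tuple $(m^{\ast},n,H(\mathit{cert}_{\mathcal{E}}))$ that no $\mathsf{Run\&Quote}$ invocation of an enclave loaded with the pinned image actually produced for this fresh nonce. From a distinguisher between $G_0$ and $G_1$ we build a reduction to Definition~\ref{def:raunf}. The reduction simulates the host-chosen loading by means of its $\mathsf{Run\&Quote}$ oracle, and it outputs the accepting $\rho$ as its forgery. The freshness of $n$ excludes replay of an old quote. Consequently, in $G_1$ the sidecar releases $(b,g)$ only under a key $k$ whose certificate hash was embedded by a genuine enclave running $Q$ with measurement $m^{\ast}$. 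A host that loads a deviating image therefore only ever causes aborts, which carry no $\beta$-dependent information, since the sidecar forwards nothing on failure.

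\emph{Hop 2 (execution integrity).} $G_2$ replaces every output and streamed transcript of the loaded enclave with the honest $\mathsf{Run}(\mathit{hdl},\cdot)$ of $Q$. This replacement covers control-channel calls, telemetry, and outbound messages. The hop costs $\Adv^{\ExeInty}$ (Definition~\ref{def:exeinty}). After it, $Q$'s host-facing behaviour is exactly the code-level behaviour on which $\NonInt$ speaks.

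\emph{Hops 3 and 4 (channels).} In $G_3$ the client-to-enclave ciphertext $\mathit{ct}=\mathsf{Enc}_k(b,g)$ and the response chunks $\mathsf{Enc}_k(r_i)$ are replaced by an ideal secure channel. That channel reveals only lengths, record sizes and timing, all of which are components of $\leak$. The hop costs $\Adv^{\mathrm{ch}_{\mathrm{client}}}$. The reduction is sound because, after hop~1, $k$ belongs to the genuine enclave, and is not host-controlled. $G_4$ does the same for the enclave-to-provider session, at cost $\Adv^{\mathrm{ch}_{\mathrm{prov}}}$. Here the endpoint is authenticated against the public certificate authorities for a hostname in $\Pol$, which is a trusted root in the model. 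What the host still sees of this session, namely SNI, IP, record sizes and cadence, lies in $\leak$.

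\emph{Hop 5 (isolation and non-interference).} This is the main obstacle. The remaining host view has three parts: the ideal-channel leakage, the enclave's host-facing outputs (the $\mathsf{Authorize}$ request, usage report, abort and retry pattern, and framing), and the side effects of running $Q$ on secret memory. $\NonInt(Q,\leak)$ makes the second part a function of $\leak$. The third part, memory and host-resource observations, is handled by reduction to Definition~\ref{def:iso}. The reduction submits $(b_0,b_1)$ as in-enclave secrets $(s_0,s_1)$; these are admissible because they have equal $\leak$. It embeds the challenger's host view into the simulation and outputs the adversary's guess, at cost $\Adv^{\Iso}$.

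The delicate points in hop~5 are the following. First, the host acts adaptively as the $\mathsf{Authorize}$ oracle and chooses failed accounts. Second, the provider's reply $r$ depends on $b$. I would handle both by treating the honest provider as a fixed oracle, as in the definition of $\leak$. The interaction-determined components, namely $|r|$, usage, status and retries, are then fixed functions of $b$. Admissibility equates them across $\beta$, so the simulator can produce them from $\leak$ alone, and the host's adaptive choices never see anything beyond $\leak$.

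Finally, I would argue that the hybrids are disjoint in what they charge. The $\NonInt$ assumption contributes no additive term, because it is a deterministic property of $Q$. Summing the five hops then gives exactly the bound stated in Theorem~\ref{thm:conf}.
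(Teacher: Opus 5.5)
Your proposal is correct and follows essentially the same game sequence as the paper. Both proofs use an RA-unforgeability failure event, then an execution-integrity failure event, then idealization of the client channel and then the provider channel, and finish with a left-or-right isolation step on the admissible pair, treating the provider as a fixed oracle. The only structural difference is that the paper applies $\NonInt(Q,\leak)$ as its own cost-free bridging game right after execution integrity and before the channel steps, whereas you fold it into the final isolation hop; this changes nothing in the bound.
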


\begin{theorem}[Faithful relay and integrity]
\label{thm:faithful}
Assume $\HW$ has execution integrity, the loaded data path $Q$ satisfies $\Verb(Q)$ and
emits the end-of-stream marker only on upstream completion, and the client-to-enclave and
enclave-to-provider channels are secure channels. Then
\[
\begin{aligned}
\Adv^{\mathrm{int}}_{\mathcal{A},\scheme}(\lambda)
&\le \Adv^{\ExeInty}_{\mathcal{A},\HW}(\lambda)\\
&\phantom{{}\le{}}+\Adv^{\mathrm{ch}_{\mathrm{client}}}_{\mathcal{A}}(\lambda)+\Adv^{\mathrm{ch}_{\mathrm{prov}}}_{\mathcal{A}}(\lambda).
\end{aligned}
\]
The same bound applies to $\Adv^{\mathrm{str}}_{\mathcal{A},\scheme}$; the provider-channel term
is needed because a host that forges an end-of-stream marker on the provider channel would have
the faithful $Q$ relay it verbatim. The host may force a request to fail and may cut a stream,
but cannot make the client accept a tampered chunk or treat a truncated stream as a completed
response.
\end{theorem}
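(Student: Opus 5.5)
We prove Theorem~\ref{thm:faithful} with a short sequence of game hops, starting from the real faithful-relay experiment $G_0$ and ending in an ideal game $G_3$ where $\Pr\{b'\neq b\vee r'\neq r\}=0$. Each hop replaces one component by its ideal counterpart and costs exactly one term of the bound.

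\begin{itemize}[leftmargin=1.2em,itemsep=2pt,topsep=2pt]
\item \textbf{$G_1$ (ideal client channel).} The attested client-to-enclave channel is replaced by an ideal secure channel. Here any message the client accepts as coming from the enclave is one the enclave actually emitted, in order. Any message the enclave accepts as from the client is the client's $(b,g)$. The client's acceptance condition now depends only on the enclave's true output. The difference between $G_0$ and $G_1$ is at most $\Adv^{\mathrm{ch}_{\mathrm{client}}}_{\mathcal{A}}$, by a reduction that simulates everything else and embeds the challenge channel.
\item \textbf{$G_2$ (ideal provider channel).} The enclave-to-provider channel, authenticated to the policy hostname $d$, is replaced likewise. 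The provider receives exactly what the enclave sent, and the enclave receives exactly the provider's $r$, including its end-of-stream marker. This costs $\Adv^{\mathrm{ch}_{\mathrm{prov}}}_{\mathcal{A}}$.
\item \textbf{$G_3$ (faithful execution).} We replace the enclave's observable transcript with $\mathsf{Run}(\mathit{hdl},\cdot)$ of the loaded $Q$. Any divergence yields an $\ExeInty$ adversary: it runs $\mathcal{A}$ with oracle access to the enclave, simulates both ideal channels, and outputs the deviating transcript. This costs $\Adv^{\ExeInty}_{\mathcal{A},\HW}$.
\end{itemize}

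In $G_3$ the event cannot occur, and we argue this from $\Verb(Q)$.

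\emph{Request side.} By $\Verb(Q)$, the bytes $Q$ sends on the provider channel are the decrypted $b$ unchanged. Algorithm~\ref{alg:relay}, line~\ref{ln:fwd}, alters only headers, never the body. So $b'=b$ for whichever provider receives it. Host influence through $\mathsf{Authorize}$ selects only the account and the policy name. Failover may repeat the send, but every attempt carries the same $b$.

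\emph{Response side.} $Q$ emits the $r_i$ verbatim. The ideal client channel delivers them unchanged. So any accepted $r'$ is a prefix-consistent copy of $r$.

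Host actions that drop, delay, or abort traffic only produce failures or non-acceptance, which are outside the event.

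The streaming claim $\Adv^{\mathrm{str}}$ reuses the same hops. In $G_3$, the client treats a response as complete only on receiving $\mathsf{eos}$ over the ideal client channel. $Q$ emits $\mathsf{eos}$ only when the ideal provider channel delivered the provider's genuine marker (line~\ref{ln:eos}). So an accepted complete response equals the full $r$. This is where the provider-channel term is essential: without $G_2$ the host could inject a marker that $Q$ faithfully relays.

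The main obstacle is making the $G_2\to G_3$ hop precise. $\ExeInty$ speaks of outputs of $\mathsf{Run}$ on "some input", but here the input is an interleaved stream that the host partly controls: control-channel replies, failover, and chunked provider data. Our first approach is to define the input as the whole interaction transcript seen by the enclave, namely the client ciphertexts, the $\mathsf{Authorize}$ replies, and the provider records. We then invoke the streamed-transcript clause of Definition~\ref{def:exeinty}. A second, subtler point is that $\Verb(Q)$ must hold for every host-chosen control input. So the argument must quantify $\Verb$ over all authorization responses, not just honest ones. We will make this explicit when stating $\Verb$ in the reduction.
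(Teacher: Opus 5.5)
Your proposal is correct and is essentially the paper's proof: three hops charged to $\ExeInty$, $\mathrm{ch}_{\mathrm{prov}}$, and $\mathrm{ch}_{\mathrm{client}}$ reach a game where $\Verb(Q)$ and the rule that $Q$ emits the end-of-stream marker only on upstream completion make the bad event impossible, and the streaming bound follows from the same three transitions. The one real difference is the order, which the paper reverses by taking the execution-integrity hop first. The paper's order is the cleaner one, because the channel hops then run with both endpoints (the sidecar and the honest $Q$) fixed. In your order, the $\mathrm{ch}_{\mathrm{client}}$ and $\mathrm{ch}_{\mathrm{prov}}$ reductions must embed a challenge inside an enclave that may still deviate from $Q$, for instance by leaking its channel key, so each hop no longer costs exactly one term.
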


\begin{theorem}[Destination authority]
\label{thm:dest}
Assume $\HW$ has execution integrity, the loaded data path $Q$ satisfies $\DestPol(Q,\Pol)$,
and the enclave-to-provider channel is a secure channel endpoint-authenticated under the
public certificate authorities, a trusted root. Then
\[
\Adv^{\mathrm{dest}}_{\mathcal{A},\scheme}(\lambda)\ \le\
\Adv^{\ExeInty}_{\mathcal{A},\HW}(\lambda)
+\Adv^{\mathrm{ch}_{\mathrm{prov}}}_{\mathcal{A}}(\lambda).
\]
The guarantee is application-layer, not network-layer egress confinement. The host carries
every packet and may choose the transport-layer peer, but plaintext leaves only over a
session the enclave authenticates for a policy hostname.
\end{theorem}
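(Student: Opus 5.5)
We prove Theorem~\ref{thm:dest} by a short sequence of game hops that begins at the real destination experiment of Definition~\ref{def:dest} and ends at an idealized game in which the bad event $E$, ``a plaintext body $b$ reaches some $d\notin\Pol$,'' cannot occur at all. Let $G_0$ be the real experiment, so that $\Adv^{\mathrm{dest}}_{\mathcal{A},\scheme}=\Pr[E\text{ in }G_0]$.

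In $G_1$ we replace the enclave's execution by the ideal evaluation of the loaded program: every output the host observes or that leaves the enclave, including every $\mathsf{send}_d$ invocation and its destination argument, is exactly what $\mathsf{Run}(\mathit{hdl},\cdot)$ on $Q$ would emit on the inputs the host supplied. We bound the gap with a reduction $\mathcal{B}_1$ to $\ExeInty$ (Definition~\ref{def:exeinty}). $\mathcal{B}_1$ simulates the host, forwards control-channel inputs to its enclave oracle, and watches the transcript. If the transcripts of $G_0$ and $G_1$ diverge, $\mathcal{B}_1$ outputs the divergent $\mathit{out}'\neq\mathsf{Run}(\mathit{hdl},\mathit{in})$. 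Hence $|\Pr[E\text{ in }G_0]-\Pr[E\text{ in }G_1]|\le\Adv^{\ExeInty}_{\mathcal{A},\HW}$.

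In $G_2$ we replace the enclave-to-provider channel by an ideal secure channel. In this channel a session opened by $Q$ toward hostname $h$ delivers plaintext only to the endpoint authenticated for $h$ under the public certificate authorities, and otherwise signals failure. A distinguisher $\mathcal{B}_2$ embeds its channel challenge on every outbound session $Q$ opens, so $|\Pr[E\text{ in }G_1]-\Pr[E\text{ in }G_2]|\le\Adv^{\mathrm{ch}_{\mathrm{prov}}}_{\mathcal{A}}$. The certificate-authority root is trusted by assumption, so mis-issuance contributes no term.

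The remaining step is to argue that $\Pr[E\text{ in }G_2]=0$, and it goes by case analysis on how plaintext could leave the enclave. In $G_2$ the enclave behaves as $Q$. By $\DestPol(Q,\Pol)$, every destination $Q$ uses is $d=\Pol[p]$ for a host-supplied policy name $p$. Algorithm~\ref{alg:relay}, line~\ref{ln:reject}, aborts when $\Pol[p]=\bot$, and the control channel (Table~\ref{tab:interface}) carries no address or path field. It follows that any $\mathsf{send}_d$ in the run satisfies $d\in\Pol$. The ideal channel then ensures that the bytes reach only the authenticated holder of $d$'s hostname, even though the host picks the transport peer. Together these show that plaintext reaches no $d\notin\Pol$. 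I expect the main obstacle to be showing that $\mathsf{send}$ is the only path by which plaintext exits the enclave. Logs, telemetry, and control messages could in principle carry body bytes to host-chosen sinks, and $\DestPol$ alone does not exclude this. I would handle it by defining ``reaches $d$'' as delivery over an application-layer session, as the theorem's remark about application-layer scope suggests, and by reading $\DestPol$ as quantifying over all plaintext-bearing outputs of $Q$. If that reading is too strong, the fallback is to invoke the host-facing part of $\Faithful$ explicitly to rule out the other channels. Summing the two hops yields the stated bound.
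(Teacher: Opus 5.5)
Your proposal is correct and follows the paper's proof essentially step for step. Both use an execution-integrity hop that pins the running $Q$ to $\DestPol(Q,\Pol)$, then an endpoint-authenticated provider-channel hop, then a final game where the bad event has probability zero, with certificate-authority mis-issuance contributing no term under the trusted-root assumption. The exit-channel worry you raise is settled in the paper by your primary reading, not the $\NonInt$ fallback: Definition~\ref{def:dest} ranges only over application destinations (provider hostname and path pairs), the proof considers only the sessions $Q$ opens, and the security summary lists only $\DestPol(Q,\Pol)$ and the trusted public-key infrastructure as structural assumptions.
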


The three theorems share the assumption that the loaded program $Q$ is the audited one and is faithful. The anchoring lemma below discharges the first half, tying the runtime measurement to source the client rebuilt; faithfulness rests on the predicate decomposition of \S\ref{sec:design}, supported by the implementation evidence of \S\ref{sec:eval} and named as future machine-checked work for full information-flow verification.

\subsection{Anchoring and Accountability}
\label{sec:formal-lemmas}

\begin{lemma}[Measurement anchoring]
\label{lem:anchor}
Assume the build is deterministic, the measurement function $\mathsf{Build}$ is
collision-resistant over loaded images, the operator's signature scheme is existentially
unforgeable under chosen-message attack, and the platform root signs evidence only for
images loaded into an enclave. If a client completes the audit-and-pin phase of
\S\ref{sec:arpa} over source $s$ and pins $m^{\ast}=\mathsf{Build}(s,R)$, then any
plaintext its sidecar releases reaches only an enclave running the image built from the
audited source $s$. Concretely,
\[
\Adv^{\mathrm{anch}}_{\mathcal{A}}(\lambda)\ \le\
\Adv^{\RAunf}_{\mathcal{A},\HW}(\lambda)+\Adv^{\mathrm{cr}}_{\mathcal{A}}(\lambda)
+\Adv^{\mathrm{euf\text{-}cma}}_{\mathcal{A}}(\lambda),
\]
where the experiment returns $1$ when a released plaintext reaches an enclave not running that image.
\end{lemma}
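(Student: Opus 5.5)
We argue by a sequence of game hops that starts from the anchoring experiment and removes, one at a time, the ways a released plaintext can reach an enclave that does not run $\mathsf{Build}(s,R)$'s image. Let $\mathsf{Bad}$ be the event that the experiment outputs $1$. The sidecar releases $(b,g)$ only at line~\ref{ln:release} of Algorithm~\ref{alg:attest}, so on $\mathsf{Bad}$ it holds a quote $\rho$ with three properties: $\mathsf{VrfyQuote}(\rho)$ accepts, $m_\rho=m^{\ast}$, and $\rho$ binds the fresh nonce $n$ and $H(\mathit{cert}_{\mathcal{E}})$ of the live session. Yet the session endpoint is not an enclave running the image $I_s$ built from $s$. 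We split $\mathsf{Bad}$ into three disjoint events and bound each by one term of the claimed sum.

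\emph{Event $E_1$: $\rho$ was never produced by any enclave on this tuple.} This covers the measurement $m^{\ast}$, the fresh nonce $n$, and the session certificate hash. Here we build a reduction $\mathcal{B}_1$ against Definition~\ref{def:raunf}. $\mathcal{B}_1$ simulates the host, answers the host's enclave loads and quote requests with its $\mathsf{Run\&Quote}$ oracle, and outputs $\rho$. Freshness of $n$ ensures that no earlier oracle answer covers this tuple. So $\Pr[E_1]\le\Adv^{\RAunf}$. The hypothesis that the root signs only for loaded images is what lets us identify ``produced by some enclave'' with ``an enclave with measurement $m^{\ast}$ holds the key for $\mathit{cert}_{\mathcal{E}}$''.

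\emph{Event $E_2$: some genuine enclave loaded an image $I'\neq I_s$ with measurement $m^{\ast}$.} The pin step (line~\ref{ln:rebuild}) gives $m^{\ast}=\mathsf{Build}(s,R)$. Determinism of the build means $I_s$ is well defined and measures to $m^{\ast}$. So $(I_s,I')$ is a collision, and the reduction $\mathcal{B}_2$ outputs it, giving $\Pr[E_2]\le\Adv^{\mathrm{cr}}$.

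\emph{Event $E_3$: the pinned manifest itself is not the operator's committed $(s,R,m)$.} An example is a substituted manifest whose source the client never meant to trust. The client asserts $\mathsf{SigVrfy}_{\mathit{pk}_{\mathcal{O}}}(\sigma;\mathit{mf})$, so a manifest not output by $\mathsf{Publish}$ gives an existential forgery, which the reduction $\mathcal{B}_3$ obtains using a signing oracle for the operator's $\mathsf{Publish}$ calls. Hence $\Pr[E_3]\le\Adv^{\mathrm{euf\text{-}cma}}$.

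Outside $E_1\cup E_2\cup E_3$, the quoted tuple was produced by a genuine enclave whose image is exactly $I_s$. Because the quote binds $H(\mathit{cert}_{\mathcal{E}})$, and the release goes over the session keyed to that certificate, the plaintext reaches that enclave. A union bound then yields the stated inequality.

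The main obstacle is making $\mathsf{Bad}\subseteq E_1\cup E_2\cup E_3$ rigorous. In particular, the step from ``the quote is genuine for $\mathit{cert}_{\mathcal{E}}$'' to ``the released bytes reach that enclave'' needs the private key of $\mathit{cert}_{\mathcal{E}}$ to be held only by the quoting enclave. That claim seems to need a channel assumption, which is absent from the bound. To handle this, I would first try to show that only the enclave code $I_s$ generated that key, which follows from $I_s$'s behavior under memory isolation. Failing that, I would argue that the release reaches the endpoint named by the certificate by definition of $\mathsf{send}_{\mathit{cert}_{\mathcal{E}}}$, and defer any key-exfiltration issues to the confidentiality theorem. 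A second subtlety is $E_3$. Signature forgery is only meaningful if the client's audit judgment is tied to the operator's signed manifest. I would phrase the experiment so that the adversary wins if the pinned $(s,R)$ differs from the one the honest operator published and the client audited.
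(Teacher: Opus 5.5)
Your proposal is correct and follows essentially the paper's own argument: the paper chains the same three failure events as successive game hops and sums them to exactly your union bound (the events need not be disjoint for this). The three events are accepted evidence not produced by a genuine enclave measuring $m^{\ast}$, charged to $\Adv^{\RAunf}_{\mathcal{A},\HW}$; a different image that also measures to $m^{\ast}$, charged to $\Adv^{\mathrm{cr}}_{\mathcal{A}}$; and a pin taken from an operator signature on other source, charged to $\Adv^{\mathrm{euf\text{-}cma}}_{\mathcal{A}}$. For the delivery step you flag, the paper takes your fallback: it concludes that the release reaches the audited enclave because the sidecar encrypts to the attested channel key, and it charges no separate channel or isolation term.
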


Anchoring composes with the three theorems. The lemma fixes which code first receives the plaintext, and the theorems bound what the host learns of and does to that plaintext while the audited code runs. The accountability lemma below characterizes what an honest verifier accepts when it binds a live measurement, and it is the second place existential unforgeability and collision resistance enter. The verifier binds the full build manifest of \S\ref{sec:arpa}, not a bare source hash.

\begin{lemma}[Accountability]
\label{lem:account}
Assume the operator's signature scheme is existentially unforgeable under chosen-message
attack, the measurement function $\mathsf{Build}$ is collision-resistant, and the
append-only transparency log is inclusion-sound and consistent, so an accepted inclusion
proof implies a recorded entry and no two honest verifiers read divergent contents at the
same position. Then an honest verifier accepts a live measurement only
when that measurement is bound to source that the operator has signed, that is recorded in
the transparency log, and that reproducibly rebuilds to the measurement. Concretely,
\[
\Adv^{\mathrm{acc}}_{\mathcal{A}}(\lambda)\ \le\
\Adv^{\mathrm{euf\text{-}cma}}_{\mathcal{A}}(\lambda)
+\Adv^{\mathrm{cr}}_{\mathcal{A}}(\lambda)
+\Adv^{\mathrm{log}}_{\mathcal{A}}(\lambda),
\]
where the experiment returns $1$ iff an honest verifier accepts a measurement that is not
so bound.
\end{lemma}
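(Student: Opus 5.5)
We prove Lemma~\ref{lem:account} by a case split on the verifier's accepting run. An honest verifier accepts a live measurement $m$ only after $\mathsf{Audit},\mathsf{Pin}$ of Algorithm~\ref{alg:bootstrap} has succeeded on some tuple $(\mathit{mf},\sigma,\pi,\eta,\tau)$ with $\mathit{mf}=(s,R,m)$. The live quote is then checked against $m^{\ast}=m$ in $\mathsf{Vrfy}$. Fix an adversary $\mathcal{A}$ that makes the verifier accept a measurement that is not bound. Unfolding ``bound'' gives three ways acceptance can fail to witness it:
\begin{enumerate}[label=(\roman*),leftmargin=1.6em,itemsep=1pt,topsep=2pt]
\item $\mathit{mf}$ was never signed by the operator;
\item $\mathit{mf}$ is not the entry recorded in the log at the position $\pi$ claims, or the verifier's view forks from another honest reader's;
\item $m$ does not actually arise from rebuilding the signed, logged source, i.e.\ the recorded binding ties $m$ to some $(s',R')\neq(s,R)$ with $\mathsf{Build}(s',R')=m$.
\end{enumerate}
We define events $E_{\mathrm{sig}},E_{\mathrm{log}},E_{\mathrm{cr}}$ for these cases. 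Since a win implies their union, the union bound gives the stated sum once each event is charged to one reduction.

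\textbf{Signature case.} Build $\mathcal{B}_{\mathrm{sig}}$ that receives $\mathit{pk}_{\mathcal{O}}$ and answers every $\mathsf{Publish}$ call of $\mathcal{A}$ via its signing oracle. It simulates the log and the build honestly, since it holds the log key in the reduction and the build is public and deterministic. When the verifier accepts under $E_{\mathrm{sig}}$, the line $\mathsf{SigVrfy}_{\mathit{pk}_{\mathcal{O}}}(\sigma;\mathit{mf})$ passed on an $\mathit{mf}$ never queried, so $(\mathit{mf},\sigma)$ is a fresh forgery. This yields the $\Adv^{\mathrm{euf\text{-}cma}}$ term.

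\textbf{Log case.} Build $\mathcal{B}_{\mathrm{log}}$, which plays the log adversary and signs manifests itself with a key it generates. Under $E_{\mathrm{log}}$, either lines~\ref{ln:incl} accepted an inclusion proof for an unrecorded entry, or lines~\ref{ln:cons} accepted a consistency proof from the stored $\eta_{-}$ to a head whose contents diverge from those another honest reader accepted. Each is by definition a break of inclusion soundness or consistency, so this case is charged to $\Adv^{\mathrm{log}}$. The one point to check is that the verifier's persisted $\eta_{-}$ update on line~\ref{ln:pin} makes a rollback or fork attempt fall under consistency rather than escaping both definitions.

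\textbf{Build case.} This is the main obstacle, because rebuild matching (line~\ref{ln:rebuild}) directly guarantees $\mathsf{Build}(s,R)=m$. So $E_{\mathrm{cr}}$ must be made precise as the situation where the adversary exhibits, or the binding elsewhere relies on, a second manifest $(s',R')$ that is signed and logged with the same $m$ but distinct source. For example, two clients pin the same $m$ believing different audited sources. Given that formulation, $\mathcal{B}_{\mathrm{cr}}$ outputs $(s,R),(s',R')$ as a collision, since determinism of the build makes both evaluations well defined. I would first try stating ``bound'' as ``$m$ has a unique signed, logged preimage that rebuilds to it'', which makes the collision case exactly the failure of uniqueness. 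I would then verify that events (i)--(iii) are exhaustive, which is what makes the three terms suffice.
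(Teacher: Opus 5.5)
Your proposal is correct and takes essentially the same route as the paper, which isolates the same three failure events (an operator-signature forgery, two distinct manifests building to the same measurement, and an accepted inclusion proof for an unrecorded entry or a view diverging from another honest reader's) and charges them to $\Adv^{\mathrm{euf\text{-}cma}}_{\mathcal{A}}$, $\Adv^{\mathrm{cr}}_{\mathcal{A}}$, and $\Adv^{\mathrm{log}}_{\mathcal{A}}$ through a game sequence equivalent to your union bound. Your uniqueness reading of ``bound'' is exactly what the paper's collision game encodes, and you are right that the rebuild assertion by itself already gives $\mathsf{Build}(s,R)=m$, so the collision term only does work under that reading.
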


\heading{Scope}
The reductions cover the data path under standard cryptographic assumptions and treat byte-for-byte faithfulness and the control-channel schemas as the named assumptions of \S\S\ref{sec:prelim} and~\ref{sec:faithful}; neither covers the honesty of the audited source, the residual the review of \S\ref{sec:arpa} narrows but cannot close. \S\ref{sec:eval} grounds the faithfulness predicates in implementation evidence, with a conformance suite for verbatim relay, source-derived schema tests for the body-free control channel, and a live ciphertext capture for the host hop.
 \section{Implementation and Evaluation}
\label{sec:eval}

\subsection{Prototype}
We implement \sys on AWS Nitro Enclaves~\cite{awsnitroattest,awsnitrotls}, a widely available cloud enclave that isolates a virtual machine from its host and reaches it only over a virtual socket, with a hardware-signed attestation over the loaded image. Three binaries realize the split of \S\ref{sec:design}. The data plane runs inside the enclave under the nitriding toolkit~\cite{nitriding}, the host runs the rest of the router, and the client runs the verifying sidecar. We enforce the boundary at build time and fail the build if the data plane links any host service-layer package.

\subsection{Evaluation Setup}
The evaluation separates security properties from runtime cost. The security tests run against the boundary the deployed image exposes. The runtime measurements use an EC2 c5.4xlarge instance (Intel Xeon Platinum 8275CL at $3.00$~GHz, $16$ vCPUs over $8$ physical cores), with the enclave allocated $2$ dedicated vCPUs on $1$ physical core and $3$~GiB of memory. The runs use two enclave images. Every security test and every real-provider measurement uses the deployed security image through a verified sidecar and enclave. The steady-state overhead runs instead use a microbenchmark image that repoints the destination policy at a local upstream with pooled connections and trusts its certificate, which isolates the relay's own cost from provider variance. This image has a different measurement and is not the security artifact. Per-run sample sizes appear in the figure captions.

\subsection{Security-Property Evaluation}
We evaluate \sys through deterministic security properties, the capabilities the design removes, rather than as a statistical detector. Table~\ref{tab:harness} is the central result. First, we replay each attack class against a plaintext-access baseline, the ordinary posture of a terminating router that the measurement study documents. Second, we mutate the host-controlled interfaces of \sys and check that each mutation is either outside the enclave's authority or fails closed. The baseline column confirms each adversary action succeeds on a plaintext router; the \sys rows therefore measure a removed capability rather than an untried one, and the \sys column attributes each block to a concrete guard.

\begin{table*}[t]
\centering
\caption{Attack and guard coverage. Attack classes reproduce on a plaintext-access baseline; malicious-host mutations against \sys are attributed to fail-closed guards.}
\label{tab:harness}
\footnotesize
\renewcommand{\arraystretch}{1.16}
\newcommand{\guardok}[1]{\textcolor{green!45!black}{$\checkmark$~#1}}
\newcommand{\attackbad}[1]{\textcolor{red!70!black}{$\times$~#1}}
\newcommand{\attackwarn}[1]{\textcolor{orange!85!black}{$\blacktriangle$~#1}}
\begin{tabular}{@{}>{\raggedright\arraybackslash}p{0.24\textwidth} >{\raggedright\arraybackslash}p{0.16\textwidth} >{\raggedright\arraybackslash}p{0.16\textwidth} >{\raggedright\arraybackslash}p{0.34\textwidth}@{}}
\toprule
\multicolumn{1}{@{}c}{\textbf{Adversary action}} & \multicolumn{1}{c}{\textbf{Plain router}} & \multicolumn{1}{c}{\textbf{\sys}} & \multicolumn{1}{c@{}}{\textbf{Guard exercised}} \\
\midrule
\acone\ tool-call rewrite & \attackbad{attack succeeds} & \guardok{blocked} & Host has no response-body rewrite point. \\
\aconea\ typosquat rewrite & \attackbad{attack succeeds} & \guardok{blocked} & Faithful relay; package names are never rewritten. \\
\aconeb\ trigger-gated rewrite & \attackbad{attack succeeds} & \guardok{blocked} & Rewrite impossible at source; trigger is moot. \\
\actwo\ secret scan & \attackbad{attack succeeds} & \guardok{blocked} & Body-free host RPC and ciphertext host hop. \\
Unknown policy or provider & \attackbad{host-chosen route} & \guardok{fails closed} & Enclave resolves only baked policy names. \\
Redirect to attacker host & \attackbad{attack succeeds} & \guardok{not followed} & Redirects are relayed, never replayed off-policy. \\
Client auth passthrough & \attackbad{leaks auth} & \guardok{dropped} & Header allowlist; enclave inserts provider credential. \\
Upstream dispatch failure & \attackwarn{host retry} & \guardok{enclave retry} & Failed account excluded before client-visible bytes. \\
Mid-stream failure & \attackbad{replay risk} & \guardok{no replay} & Retry disabled once response bytes are visible. \\
\bottomrule
\end{tabular}
\end{table*}

The remaining checks cover the assumptions behind the harness. For confidentiality, we run the full path and capture the host-side network hop during a live streamed completion. The capture holds only ciphertext, with neither prompt nor response in the clear, and the check rejects an empty capture, which would otherwise pass vacuously. For execution integrity, the platform reports the enclave running with no debugging flags, so the attestation carries real rather than zeroed measurements; the client rejects an empty nonce, zeroed measurements, and any destination outside the enclave-owned policy. For faithfulness, a conformance suite drives the relay with inputs built to trip a naive transformer, including multimodal content, tool-call structures, a structured-output request, exotic field shapes, and a large body, and confirms request and response are carried byte for byte while the client's authorization header is dropped.

\subsection{Adaptive and Coverage-Oriented Evaluation}
A second layer of checks asks how much room the untrusted host still has to influence the protected path, beyond the named attacks of Table~\ref{tab:harness}. The first test enumerates the host/enclave control-channel schemas from source and guards them in unit tests. The authorize request carries only credentials and labels, never a body, header, destination, or response byte, and the reply only allows or denies, selects an account, names an enclave-owned policy, and supplies a provider credential. With three valid provider-policy pairs in the current table, the allow-path destination choice is bounded by $\log_2 3 = 1.58$ bits per request, and those bits select among audited destinations. A separate test sets the host's metadata model to a value different from the model inside the client body and confirms the body sent onward is unchanged. The host therefore retains availability, account, credential, provider, and telemetry influence, but not a body-derived control channel.

The second test targets the check-to-use gap. The sidecar fails closed before constructing a proxy on any verification error, and after a successful check it pins the attested leaf certificate. A synthetic certificate-rotation test confirms that a different certificate fails during the TLS handshake before the rotated endpoint handles any request. The implemented contract is one verified sidecar session, not continuous re-attestation; on an enclave restart with a new certificate, the client establishes a new verified session.

The third test measures the stream side channel rather than hiding it. A fuzz target compares both relay modes, line-at-a-time for server-sent events and whole-body otherwise, against the identity relation over generated JSON bodies and event chunks, covering tool-call structures, multimodal-shaped payloads, unusual encodings, oversized inputs, and chunk boundaries around JSON delimiters. \sys protects body contents but does not pad event sizes or add timing jitter, so a host observing ciphertext record sizes and timing can learn the event cadence and approximate line sizes. We treat this as an explicit residual leakage channel, not as a violation of the body-confidentiality claim.

A machine-checked symbolic model cross-checks the protocol ordering under ideal cryptography. Built in the applied pi calculus and checked with ProVerif, it confirms attested release, check-to-use binding, destination authority, response provenance, and a body-free host view against an adversary that controls every channel, and a falsification suite that removes one protocol check at a time confirms each check is load-bearing (supplementary material). The three tests above supply the facts the model assumes, with the schema tests anchoring the body-free control channel and the conformance and fuzz suites anchoring byte-for-byte faithfulness.

\subsection{Auditability}
The image builds reproducibly~\cite{awsnitrorepro} to a signed measurement the sidecar pins, the binding the protocol of \S\ref{sec:arpa} relies on. The operator signs the build manifest and publishes it to a public append-only transparency log~\cite{sigstore}, and the sidecar pins a measurement only after it verifies the log's inclusion proof, its signed tree head, and a consistency proof against the tree head it last saw. This deploys the accountability of Lemma~\ref{lem:account} rather than assuming it. An operator cannot pin for one client a measurement it hides from the public log or shows differently to another. Against a live public log a genuine manifest verifies end to end, while an unlogged manifest, a manifest that mismatches its logged entry, and a forked or rolled-back log each fail closed before any pin; the entry and verification logs ship with the artifact.

The audit target is the trusted relay, $851$ lines across five files that link no package from the router's service layer. Under a broader source-line count applied identically to both systems, \sys measures $1{,}236$ lines against $10{,}434$ for the nearest open enclave gateway~\cite{opengradient} at revision \texttt{afa7966}, which runs a transforming gateway inside the enclave. This is a source audit, not a running differential. To make the auditability claim testable, we seed ten isolated invariant violations. Codex~\cite{codexcli} found $8/10$, missing a destination-host/path mismatch and a host-controlled model-in-body case; Claude Code~\cite{claudecode} found $10/10$ on the same minimal fixture. Neither flagged the clean implementation. We treat these pilots as calibration evidence that a small faithful relay is reviewable by commodity coding agents, not as proof of human audit completeness. The toolkit, base image, and toolchain are shared and pinned, so the community audits them once.

\begin{figure*}[t]
\centering
\includegraphics[width=0.96\textwidth]{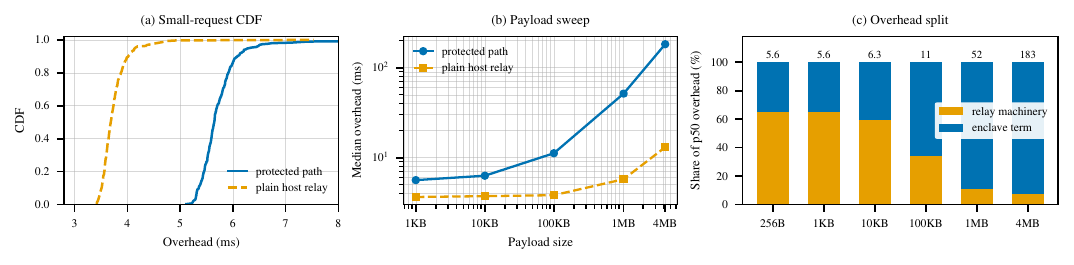}
\caption{Steady-state relay overhead against a local upstream, for the protected path and for the identical relay running as a plain host process. (a) Overhead for a small body, medians $5.7$ and $3.7$~ms. (b) Median overhead against body size; the curve gap is the enclave term. (c) Its decomposition; the enclave term overtakes the relay machinery between $10$ and $100$~KB.}
\label{fig:runtime}
\end{figure*}

\subsection{Workload and Runtime Cost}
The relay adds a few milliseconds per request in steady state, small against a model call of about a second. To measure the relay's own cost without provider noise, we replace the provider with a zero-latency local upstream and reuse pooled connections, the common case for a router that keeps upstream sessions open. For a small request body the added latency has a median of $5.7$~ms and a 95th percentile of $6.3$~ms (Figure~\ref{fig:runtime}a). It stays near this floor up to about $10$~KB. Beyond that it grows with the body, since every byte must be encrypted and carried across the extra hops, reaching $52$~ms at a $1$~MB body and $183$~ms at a $4$~MB multimodal-scale body (Figure~\ref{fig:runtime}b).

A plain-router control isolates the boundary's own cost. Running the identical relay logic as an ordinary host process costs $3.7$~ms at a small body (Figure~\ref{fig:runtime}a, dashed), so the confidentiality boundary adds $2.0$~ms over the plaintext router it replaces, covering the virtual-socket data path, the dedicated $2$-vCPU allocation, and the verifying sidecar hop. As the body grows the gap widens (Figure~\ref{fig:runtime}b) and the split inverts (Figure~\ref{fig:runtime}c), with the enclave term overtaking the relay machinery between $10$ and $100$~KB and dominating at multimodal scale. Large bodies are expensive because bytes must be carried and encrypted across the boundary, not because the relay logic is slow. Opening an attested session takes a further $12$~ms, paid once per session.

\begin{figure*}[t]
\centering
\includegraphics[width=\textwidth]{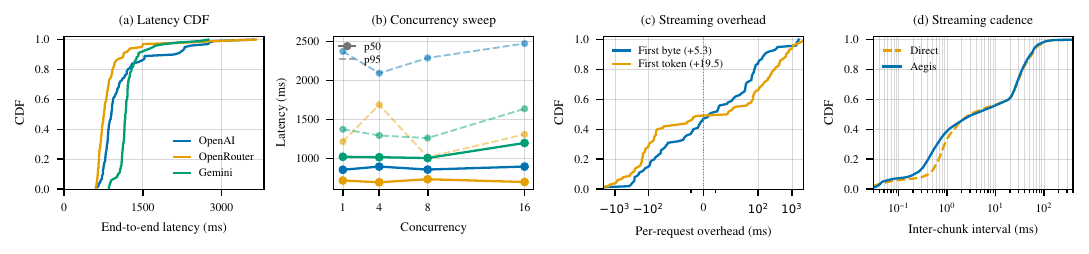}
\caption{Real-provider workloads and streaming through the verified path. (a) End-to-end latency CDFs, $100$ provider-native requests per provider, medians $902$ to $1175$~ms. (b) Median (solid) and 95th-percentile (dashed) latency against concurrency; all $600$ requests succeed, each median within $20\%$ of its single-request value. (c) Paired per-request overhead on a fixed $256$-token streamed completion, median $5.3$~ms to the first byte and $19.5$~ms to the first token. (d) Inter-chunk intervals through \sys and direct coincide (medians $3.5$ and $3.0$~ms).}
\label{fig:providers}
\end{figure*}

The remaining measurements use real providers. \sys carries three provider-native endpoints with no in-enclave conversion layer, OpenAI Responses, OpenRouter Chat Completions, and Gemini generateContent, and exercises each through the verified path. A direct-provider runner that only validates keys isolates provider variance, and a \sys-through runner sends synthetic requests through a verified sidecar; only the latter supports end-to-end claims. All $100$ \sys-through requests per provider succeeded and returned the expected marker, and end-to-end latency is dominated by provider variance rather than by the relay (Figure~\ref{fig:providers}a).

Streaming through the boundary preserves the provider's token cadence. Over $100$ streamed completions through the verified path and $100$ sent directly to the same provider, interleaved request by request, the inter-chunk arrival intervals coincide (Figure~\ref{fig:providers}d), so the enclave hop adds no measurable jitter between tokens. On those paired runs the boundary adds a median of $5.3$~ms to the first byte, matching the steady-state relay cost, and $19.5$~ms to the first token (Figure~\ref{fig:providers}c), while the per-request first-token latency is set by the provider and varies by hundreds of milliseconds.

The $2$-vCPU enclave is not a bottleneck at moderate parallelism. In the three-provider concurrency run all $600$ requests succeeded and median latency stays within $20\%$ of its single-request value through concurrency $16$ (Figure~\ref{fig:providers}b).

In short, confidentiality and integrity cost a small constant on small requests and a bounded copy cost on large ones, well under the latency of the model call they protect.
 \section{Discussion and Limitations}
\label{sec:discussion}

\heading{An unavoidable trust anchor}
When the upstream is an ordinary first-party provider that speaks plaintext, some party between the client and that provider must terminate the client's session and hold the body in the clear, so a trust anchor on that relay code is unavoidable. Only its form is open, not its existence. Ours today is a per-release human audit of a minimal relay, with its measurement anchored in a public append-only log; an accepted measurement is then one the operator has publicly committed to (\S\ref{sec:arpa}). The audit can still move toward a machine-checked non-interference property, which remains future work. Undecidability rules out a complete automatic check of arbitrary code, so a plaintext-holding relay always rests on reading its code or machine-checking a restricted property of it.

\heading{Removing plaintext access}
The relay holds plaintext only because a first-party upstream cannot. When the next hop is another attested router, the relay can require attested transport from it and forward only to an allowlisted measurement, the attested-upstream direction of \S\ref{sec:design}. It then becomes a blind ciphertext forwarder that needs no body-confidentiality anchor, and trust extends per attested hop rather than resting on the first relay. This path eliminates the plaintext anchor, and our single-hop prototype does not yet take it.

\heading{Residual leakage}
\sys protects body contents only up to the explicit leakage profile of \S\ref{sec:goals}, the sizes, timing, cadence, account, provider, and usage the host still observes. This residual is exploitable. An observer who records only ciphertext record sizes and timing can recover the per-token length sequence of a streamed response and reconstruct text from it~\cite{promptkeylog}. For a remote eavesdropper the obstacles are reaching an on-path vantage and training a per-target reconstruction model. The router operator faces neither. It terminates every connection, reads record sizes from the buffers it forwards, and its upstream accounts supply known-plaintext training data. The relay also re-frames each response toward the client, so transport-level padding or batching applied upstream does not survive the hop.

The enclave is the one place on the path where a defense sits outside the operator's control. Traffic shaping baked into the measured image is attested and the host cannot disable it, a guarantee no plaintext gateway can make. Record-level padding would close the size dimension while preserving byte-for-byte faithfulness, and timing defenses fit at a larger trusted base. We leave this defense and its trusted-base cost to future work.

\heading{Boundaries}
The enclave checks a real provider's certificate in the usual way, against public certificate authorities and the expected hostname. It does not pin the provider's specific key. A compromised authority could therefore issue a fake certificate that the enclave accepts. Pinning would close this gap, but routine key rotation would then break requests until the enclave image is rebuilt and its measurement is republished. We leave this as follow-on hardening.

The transparency-log check rests on a trusted external anchor in the same way. The sidecar verifies an inclusion proof and a signed tree head at each pin and checks consistency against the head it last saw, which rules out a log that drops or rewrites the operator's entry across one client's repeated observations. A log that shows different clients different views is ruled out by the external witnesses and gossip that monitor a public log, not by any single client's check.

Egress confinement is platform-dependent. The relay resolves every destination inside its measured image, so a compromised host cannot redirect plaintext at the application layer, but forcing that confinement below the image is a platform property. On the current platform the host process performs the enclave's outbound networking, which makes below-image egress control hard to enforce. We leave it out of the present prototype.

Because \sys keeps no security-relevant state in the enclave across requests, the rollback and forking attacks that target stateful enclaves do not apply. The host can still roll back its own usage records, which we leave unprotected; a deployment that needs tamper-proof accounting can add state continuity on top.

A new enclave version needs a new check before the client trusts it, but the split keeps this cheap. Provider, billing, scheduling, authentication, and management changes stay on the host and leave the measurement untouched; only changes to the small relay core require a fresh check (\S\ref{sec:arpa}). A client that does not rebuild locally can accept a new measurement signed by a key it already trusts.

\heading{Comparison}
\S\ref{sec:related} compares \sys with the closest open gateway in detail. The design difference that matters here is the integrity anchor. Ours rides the live attested session, fresh and non-replayable by construction, while the open gateway offers portable signed receipts and network-level unlinkability that we do not target. A deployment that wants offline checking can add receipts bound to fresh client nonces.

A further extension would encrypt provider keys under an enclave-held key, letting one router serve mutually distrusting tenants, an isolation our single-operator design does not need.
 \section{Related Work}
\label{sec:related}

\sys spans attested LLM gateways, client-side defenses, enclave-shielded systems, and attested deployment. Table~\ref{tab:related} summarizes the comparison.

\heading{The closest prior art}
An open-source enclave gateway~\cite{opengradient} is our nearest neighbor. It routes to third-party providers from a cloud enclave, and attests its image and signs response hashes. Its posture is verify-after-receive, while \sys refuses before sending. The gateway also runs a transforming router inside the enclave, with provider selection and payment logic. \sys keeps only a faithful passthrough there, so the audited measurement stays small. The gateway covers more providers and offers network-level unlinkability and portable receipts.

Portcullis~\cite{portcullis} is the closest peer-reviewed system. It rewrites prompts to mask sensitive fields and restores them on return, mainly protecting privacy against the provider, and lighter-weight sanitizers share the posture~\cite{prosan}. \sys passes content through faithfully because rewriting would break tool-call semantics. \sys protects integrity and confidentiality against the router operator.

\heading{Guardrails and signed responses}
Agent-side systems screen prompts and tool calls at the client, including guardrail frameworks~\cite{llamafirewall} and control-flow integrity for agents~\cite{camel}. They observe only what the router delivers, so a schema-valid rewrite passes and request-path secrets stay exposed. Signature-based provenance for LLM APIs verifies responses after receipt and needs the serving side to cooperate~\cite{aex}. \sys removes the rewrite point instead.

\begin{table*}[!t]
\centering
\caption{Comparison with prior approaches against a malicious LLM API router.}
\label{tab:related}
\footnotesize
\renewcommand{\arraystretch}{1.16}
\newcommand{\pfull}{\tikz[baseline=-0.55ex]\fill[green!75!black] (0,0) circle (0.55ex);}
\newcommand{\phalf}{\tikz[baseline=-0.55ex]{\fill[orange!85!black] (0,0) -- (90:0.55ex) arc (90:270:0.55ex) -- cycle;\draw[orange!85!black,line width=0.5pt] (0,0) circle (0.55ex);}}
\newcommand{\pnone}{\tikz[baseline=-0.55ex]\draw[red!70!black,line width=0.5pt] (0,0) circle (0.55ex);}
\newcommand{\pna}{--}
\newcommand{\pq}[1]{\,{\scriptsize #1}}
\resizebox{\textwidth}{!}{%
\begin{tabular}{@{}llllll@{}}
\toprule
\multicolumn{1}{@{}c}{\textbf{Approach}} & \multicolumn{1}{c}{\textbf{Trusted component}} & \multicolumn{1}{c}{\textbf{Client verification}} & \multicolumn{1}{c}{\textbf{Faithful relay}} & \multicolumn{1}{c}{\textbf{Anchored measurement}} & \multicolumn{1}{c@{}}{\textbf{Attack coverage}} \\
\midrule
Plaintext routers~\cite{litellm,openrouter,portkey,kongai} & operator honesty & \pnone\pq{none} & \pnone\pq{may rewrite} & \pna & \pnone\pq{all four succeed} \\
Client-side defenses~\cite{yaim,llamafirewall,camel} & client-side filters & \phalf\pq{after receive} & \pnone\pq{unenforced} & \pna & \pnone\pq{evadable} \\
Provider-signed responses~\cite{yaim,aex} & provider signatures & \phalf\pq{after receive} & \phalf\pq{response only} & \pna & \phalf\pq{provider opt-in} \\
Prompt-masking proxies~\cite{portcullis,prosan} & masking proxy\textsuperscript{a} & \pnone\pq{none} & \pnone\pq{rewrites by design} & \pna & \pnone\pq{targets the provider} \\
Attested gateways~\cite{opengradient,redpill,phala} & enclave, transforming router & \phalf\pq{after receive} & \pnone\pq{transforms} & \phalf\pq{image only} & \phalf\pq{after the fact} \\
\midrule
Attested model serving~\cite{applepcc,tinfoil,privatemode} & enclave, whole serving stack & \pfull\pq{before release} & \pna\pq{serves own models} & \phalf\pq{binaries, log} & \pfull\pq{forfeits provider choice} \\
Direct provider access & provider only & \pfull\pq{provider TLS} & \pfull\pq{no intermediary} & \pna & \pfull\pq{forfeits routing} \\
\midrule
\multicolumn{1}{c}{\textbf{\sys}} & \textbf{enclave, faithful passthrough} & \pfull\,\textbf{before release} & \pfull\,\textbf{byte-for-byte} & \pfull\,\textbf{audit, rebuild, log} & \pfull\,\textbf{all four, routing kept} \\
\bottomrule
\addlinespace[2pt]
\multicolumn{6}{@{}l@{}}{\scriptsize \pfull~yes \quad \phalf~partial \quad \pnone~no \quad \pna~not applicable \qquad \textsuperscript{a}\,Portcullis attests its masking gateway; ProSan runs as a local sanitizer.} \\
\end{tabular}%
}
\end{table*}

\heading{Enclave-shielded systems}
Prior work shields network intermediaries inside enclaves, including cloud network functions~\cite{safebricks} and anonymity relays~\cite{sgxtor}. These systems show that an enclave can protect an intermediary. \sys applies the same idea to LLM application traffic.

Confidential-AI work protects the model pipeline, including confidential LLM serving~\cite{pipellm} and verifiable private execution~\cite{slalom}. Production systems deploy the same posture for assistant features~\cite{whatsapppp}, and attested model serving lets a client verify the serving stack before releasing a prompt, anchored to published binaries and a transparency log~\cite{applepcc} or reproducibly built images~\cite{tinfoil,privatemode}; the client adopts the hosted models. These are first-party or platform-controlled settings. \sys places no model in the enclave and targets a third-party router that relays to the client's chosen provider.

Shielding only part of a model is not automatically safe. Attacks recover models or inputs from partitioned deployments~\cite{noprivacy,gameofarrows}. The enclave in \sys holds only a faithful-passthrough data plane, with no weight or tensor to attack.

\heading{Attestation}
This part of \sys builds on attested deployment and supply-chain transparency. We use an enclave toolkit~\cite{nitriding} and platform support for in-enclave TLS termination, certificate binding, attestation, and reproducible measurements~\cite{awsnitrotls,awsnitroattest,awsnitrorepro}. We draw on attestation bound into transport security~\cite{ratls}, software signing with transparency logs~\cite{sigstore}, and enclave-built verifiable binaries~\cite{attestablebuilds}.

Remote-attestation work covers time-of-check to time-of-use gaps~\cite{toctoura}, and state-continuity systems address rollback and forked state~\cite{memoir}. \sys avoids that problem by holding no cross-request state in the enclave.

Hardware trust is conditional. Prior work breaks enclave roots through side channels~\cite{controlledchannel} and transient execution~\cite{foreshadow}. Confidential-virtual-machine attacks~\cite{cachewarp} target other platforms; attestation is not proof of absolute security.
 \section{Conclusion}
\label{sec:conclusion}

Malicious LLM API routers are dangerous. Plaintext at the TLS-terminating router makes tool-call rewrites, trigger-gated attacks, and passive secret exfiltration possible. We propose \sys, an attested faithful-passthrough router that makes the enclave the only component touching plaintext.  We empirically demonstrate that \sys blocks all four malicious-router attacks while preserving provider-native routing with low overhead. 

\appendices

\section{Proofs}
\label{app:proofs}

We give each reduction as an explicit sequence of games on a common probability space, following the methodology of Shoup~\cite{shoupgames}. Write $S_i$ for the event that Game~$i$ returns the value favorable to the adversary, the guess $\beta'{=}\beta$ for the confidentiality experiment and the bad event for the integrity, authority, and accountability experiments. Each step is one of three kinds: a \emph{bridging} step that rewrites the game with $\Pr[S_i]$ unchanged; a \emph{failure-event} step, where two games coincide unless an event $F$ occurs and $|\Pr[S_i]-\Pr[S_{i+1}]|\le\Pr[F]$ by the difference lemma; or an \emph{indistinguishability} step bounded by a primitive's distinguishing advantage. The advantage symbols are those of \S\ref{sec:prelim}.

\newcommand{\gamebox}[1]{\fbox{\parbox{\dimexpr\linewidth-2\fboxsep-2\fboxrule}{\begin{tabular}{@{}l@{}}#1\end{tabular}}}}
\begin{figure*}[t]
\small
\centering
\begin{minipage}[t]{0.48\textwidth}
\gamebox{%
\textbf{Game } $\mathsf{G}^{\mathrm{conf}}_{\scheme,\mathcal{A}}(\lambda)$: \\[2pt]
1.\ $\mathit{pp}\gets\HW.\mathsf{Setup}(1^\lambda)$ \\
2.\ $(b_0,b_1,\mathit{st})\gets\mathcal{A}(\mathit{pp})$;\ \ $\leak(b_0){=}\leak(b_1)$ \\
3.\ $\beta\xleftarrow{\$}\{0,1\}$ \\
4.\ $\beta'\gets\mathcal{A}^{\mathsf{Rel}(b_\beta)}(\mathit{st})$ \\
5.\ \textbf{return } $[\beta'{=}\beta]$
}
\end{minipage}
\hfill
\begin{minipage}[t]{0.48\textwidth}
\gamebox{%
\textbf{Game } $\mathsf{G}^{\mathrm{int}}_{\scheme,\mathcal{A}}(\lambda)$: \\[2pt]
1.\ $\mathit{pp}\gets\HW.\mathsf{Setup}(1^\lambda)$ \\
2.\ $(b,\mathit{st})\gets\mathcal{A}(\mathit{pp})$ \\
3.\ $(\hat b,\hat r,r)\gets\mathsf{Sess}_{\mathcal{A}}(b)$ \\
4.\ \textbf{return } $[\hat b{\neq}b\ \vee\ \hat r{\neq}r]$
}
\end{minipage}

\vspace{0.7em}

\begin{minipage}[t]{0.48\textwidth}
\gamebox{%
\textbf{Game } $\mathsf{G}^{\mathrm{str}}_{\scheme,\mathcal{A}}(\lambda)$: \\[2pt]
1.\ $\mathit{pp}\gets\HW.\mathsf{Setup}(1^\lambda)$ \\
2.\ $(b,\mathit{st})\gets\mathcal{A}(\mathit{pp})$ \\
3.\ $(\mathit{acc},\mathit{eos})\gets\mathsf{Sess}_{\mathcal{A}}(b)$ \\
4.\ \textbf{return } $[\mathit{acc}{=}\textsf{done}\ \wedge\ \neg\mathit{eos}]$
}
\end{minipage}
\hfill
\begin{minipage}[t]{0.48\textwidth}
\gamebox{%
\textbf{Game } $\mathsf{G}^{\mathrm{dest}}_{\scheme,\mathcal{A}}(\lambda)$: \\[2pt]
1.\ $\mathit{pp}\gets\HW.\mathsf{Setup}(1^\lambda)$ \\
2.\ $(b,\mathit{st})\gets\mathcal{A}(\mathit{pp})$ \\
3.\ $d\gets\mathsf{Sess}_{\mathcal{A}}(b)$ \\
4.\ \textbf{return } $[d\notin\Pol]$
}
\end{minipage}

\vspace{0.7em}

\begin{minipage}[t]{0.48\textwidth}
\gamebox{%
\textbf{Game } $\mathsf{G}^{\mathrm{anch}}_{\mathcal{A}}(\lambda)$: \\[2pt]
1.\ $\mathit{pp}\gets\HW.\mathsf{Setup}(1^\lambda)$ \\
2.\ $\mathsf{Audit},\mathsf{Pin}(s)$;\ \ $m^{\ast}{\gets}\mathsf{Build}(s,R)$ \\
3.\ $\mathit{hdl}\gets\mathsf{Sess}_{\mathcal{A}}$ \\
4.\ \textbf{return } $[\mathsf{img}(\mathit{hdl})\neq\mathsf{img}_{s}]$
}
\end{minipage}
\hfill
\begin{minipage}[t]{0.48\textwidth}
\gamebox{%
\textbf{Game } $\mathsf{G}^{\mathrm{acc}}_{\mathcal{A}}(\lambda)$: \\[2pt]
1.\ $\mathit{pp}\gets\HW.\mathsf{Setup}(1^\lambda)$ \\
2.\ $(\mathit{mf},\sigma,\pi,\eta,m)\gets\mathcal{A}(\mathit{pp})$ \\
3.\ \textbf{if } $\neg\mathsf{Vrfy}_{\mathrm{acc}}(\mathit{mf},\sigma,\pi,\eta,m)$:\ \textbf{ret } $0$ \\
4.\ \textbf{return } $[m\text{ unbound to signed/logged }s]$
}
\end{minipage}
\caption{The six experiments as games. The adversary $\mathcal{A}$ is the host; $\mathsf{Sess}_{\mathcal{A}}(b)$ and $\mathsf{Rel}(b_\beta)$ are the oracles of Definitions~\ref{def:sess} and~\ref{def:rel}, $\mathsf{img}_{s}$ the image reproducibly built from source $s$, and $\mathsf{Vrfy}_{\mathrm{acc}}$ the honest accountability check.}
\label{fig:games}
\end{figure*}

Figure~\ref{fig:games} renders the six experiments as games. Each proof below starts from the corresponding box as its Game~0 and transforms it by the steps the headings name.

\begin{definition}[Verified-session oracle]
\label{def:sess}
$\mathsf{Sess}_{\mathcal{A}}(b)$ runs one protocol session on body $b$ with $\mathcal{A}$ controlling the host. The adversary chooses the image to load, obtaining a handle $\mathit{hdl}$, and drives every host-side channel, the account selection, and the host control channel. The honest sidecar runs $\mathsf{Attest},\mathsf{Vrfy}$ and releases the encrypted body to the enclave only when $\mathsf{Vrfy}$ accepts a quote binding the pinned measurement $m^{\ast}$, the fresh nonce $n$, and the channel key $k$, and otherwise releases $\bot$. The honest provider at the resolved destination returns response $r$. The oracle exposes the body $\hat b$ the provider receives, the response $\hat r$ the client accepts, the destination $d$ a plaintext body reaches, the completion flag $\mathit{acc}$, the provider end-of-stream flag $\mathit{eos}$, and the handle $\mathit{hdl}$, and the entire host view of the session.
\end{definition}

\begin{definition}[Release oracle]
\label{def:rel}
$\mathsf{Rel}(b_\beta)$ is $\mathsf{Sess}_{\mathcal{A}}$ specialized to the confidentiality challenge. It releases $\mathsf{Enc}_k(b_\beta)$ to the enclave only on a quote that $\mathsf{Vrfy}$ accepts, and returns to $\mathcal{A}$ the entire host view of the resulting session, namely the channel ciphertexts, the host control-channel fields, and the running enclave's host-observable memory. The adversary queries $\mathsf{Rel}$ and outputs a guess $\beta'$.
\end{definition}

The games separate what each reduction proves from what it assumes. The confidentiality, integrity, and destination games take the loaded program's faithfulness as an assumption on $Q$, its host non-interference $\NonInt(Q,\leak)$, verbatim relay $\Verb(Q)$, and baked-destination resolution $\DestPol(Q,\Pol)$, rather than as a conclusion. Lemma~\ref{lem:anchor} discharges that $Q$ is the audited program, and the audit, the conformance and schema tests of \S\ref{sec:eval}, and the symbolic model of Appendix~\ref{app:symbolic} are the present evidence that the audited program is faithful, with full machine-checked non-interference left to future work.

\heading{Confidentiality (Theorem~\ref{thm:conf})}
\textit{Game 0} is the equal-leakage confidentiality experiment $\mathsf{G}^{\mathrm{conf}}$ of Figure~\ref{fig:games}. The challenger samples $\beta\xleftarrow{\$}\{0,1\}$ and runs $\scheme$ on $b_\beta$ against the $\mathcal{A}$-controlled host, which chooses the image loaded into the enclave; the sidecar releases the encrypted body only after attest-and-verify binds the pinned measurement $m^{\ast}$, the fresh nonce, and the session's channel key. Then $\Adv^{\mathrm{conf}}_{\mathcal{A},\scheme}=|\Pr[S_0]-\tfrac12|$.

\textit{Game 1 (failure event, RA-unforgeability).} Abort and output a random bit if the sidecar accepts evidence on a measurement-input-output tuple no enclave produced. Games 0 and 1 coincide unless this event $F_1$ occurs, and $\Pr[F_1]\le\Adv^{\RAunf}_{\mathcal{A},\HW}$, so $|\Pr[S_0]-\Pr[S_1]|\le\Adv^{\RAunf}_{\mathcal{A},\HW}$. Hereafter an accepted session runs an enclave whose loaded image measures $m^{\ast}$, the pinned data path $Q$, which the theorem assumes satisfies $\NonInt(Q,\leak)$ (Lemma~\ref{lem:anchor} ties $m^{\ast}$ to the audited source, a separate result whose terms are not summed here).

\textit{Game 2 (failure event, execution integrity).} Abort if the running enclave produces an accepted output or transcript other than that of the honest $Q$, that is, the host drives $Q$ off its semantics. Then $\Pr[F_2]\le\Adv^{\ExeInty}_{\mathcal{A},\HW}$, so $|\Pr[S_1]-\Pr[S_2]|\le\Adv^{\ExeInty}_{\mathcal{A},\HW}$. Hereafter every host-facing output of $Q$ is the honest one.

\textit{Game 3 (bridging, non-interference).} Rewrite those outputs as $\NonInt(Q,\leak)$ specifies, a function of $\leak$ alone. By the hypothesis $\NonInt(Q,\leak)$ this is the same distribution, so $\Pr[S_3]=\Pr[S_2]$; the step uses no computational assumption.

\textit{Game 4 (indistinguishability, client channel).} Idealize the client-to-enclave secure channel, replacing the host's view of every message in the session, the request ciphertext up and the response ciphertext down, by the ideal-channel transcript that reveals only message lengths and timing; the enclave still receives the real body. A distinguisher breaks that secure channel, so $|\Pr[S_3]-\Pr[S_4]|\le\Adv^{\mathrm{ch}_{\mathrm{client}}}_{\mathcal{A}}$.

\textit{Game 5 (indistinguishability, provider channel).} Idealize the enclave-to-provider secure channel the same way, in both directions, so the host's view of the forwarded body and the returned response reduces to lengths and timing; $|\Pr[S_4]-\Pr[S_5]|\le\Adv^{\mathrm{ch}_{\mathrm{prov}}}_{\mathcal{A}}$.

\textit{Game 6 (indistinguishability, memory confidentiality).} After Games 4 and 5 the host's only $\beta$-dependent view is the enclave's running memory, which holds the body and the provider's reply. Run the enclave through the isolation challenger on the admissible pair $b_0,b_1$, the provider acting as the fixed oracle of the leakage profile so its observable reply is equal across the pair, and swap the secret from $b_\beta$ to the fixed $b_0$; this is a left-or-right isolation step whose distinguisher is an isolation distinguisher, so $|\Pr[S_5]-\Pr[S_6]|\le\Adv^{\Iso}_{\mathcal{A},\HW}$. The reduction holds no plaintext $b_\beta$, since every wire view is now the length-only ideal-channel transcript and the challenger runs the enclave on the secret.

In Game 6 nothing depends on $\beta$, so $\Pr[S_6]=\tfrac12$. Summing the six transitions gives the bound of Theorem~\ref{thm:conf}. \qed

\heading{Faithful relay and integrity (Theorem~\ref{thm:faithful})}
\textit{Game 0} is the faithful-relay experiment $\mathsf{G}^{\mathrm{int}}$ on a submitted body $b$, with $r$ the response the chosen provider returns; $S_0$ is the bad event that the provider receives some $b'\neq b$ or the client accepts some $r'\neq r$, and $\Adv^{\mathrm{int}}_{\mathcal{A},\scheme}=\Pr[S_0]$.

\textit{Game 1 (failure event, execution integrity).} Abort if the host drives $Q$ off the verbatim-relay semantics $\Verb(Q)$, so $|\Pr[S_0]-\Pr[S_1]|\le\Adv^{\ExeInty}_{\mathcal{A},\HW}$. Hereafter $Q$ forwards the request and relays the response, including every chunk through the end-of-stream marker, byte for byte, emitting the marker to the client only when it relays one from the provider.

\textit{Game 2 (failure event, enclave-to-provider channel).} Abort if any message on the enclave-to-provider channel is altered in either direction, the request body delivered to the provider differing from the one $Q$ sent, or a response chunk or end-of-stream marker $Q$ receives differing from the one the provider sent; both break that secure channel, so $|\Pr[S_1]-\Pr[S_2]|\le\Adv^{\mathrm{ch}_{\mathrm{prov}}}_{\mathcal{A}}$.

\textit{Game 3 (failure event, client-to-enclave channel).} Abort if the response the client accepts differs from the one $Q$ returned, or the client accepts as completed a stream whose marker $Q$ never relayed; both break the client-to-enclave secure channel, so $|\Pr[S_2]-\Pr[S_3]|\le\Adv^{\mathrm{ch}_{\mathrm{client}}}_{\mathcal{A}}$.

In Game 3 the provider receives exactly $b$, the client accepts exactly $r$, and a stream lacking the provider's marker is treated as incomplete, so $\Pr[S_3]=0$ and $\Adv^{\mathrm{int}}_{\mathcal{A},\scheme}\le\Adv^{\ExeInty}_{\mathcal{A},\HW}+\Adv^{\mathrm{ch}_{\mathrm{client}}}_{\mathcal{A}}+\Adv^{\mathrm{ch}_{\mathrm{prov}}}_{\mathcal{A}}$. The same three transitions rule out the streaming-integrity bad event $\mathsf{G}^{\mathrm{str}}$, a client accepting a truncated response as completed: a false completion needs a forged end-of-stream marker, emitted by a deviating $Q$ (Game 1), forged on the provider channel and relayed (Game 2), or forged on the client channel (Game 3). Hence $\Adv^{\mathrm{str}}_{\mathcal{A},\scheme}\le\Adv^{\ExeInty}_{\mathcal{A},\HW}+\Adv^{\mathrm{ch}_{\mathrm{client}}}_{\mathcal{A}}+\Adv^{\mathrm{ch}_{\mathrm{prov}}}_{\mathcal{A}}$. \qed

\heading{Destination authority (Theorem~\ref{thm:dest})}
\textit{Game 0} is the destination-authority experiment $\mathsf{G}^{\mathrm{dest}}$; $S_0$ is the bad event that a plaintext body reaches some $d\notin\Pol$, and $\Adv^{\mathrm{dest}}_{\mathcal{A},\scheme}=\Pr[S_0]$.

\textit{Game 1 (failure event, execution integrity).} Abort if the host drives $Q$ off $\DestPol(Q,\Pol)$, so that $Q$ resolves a destination outside the baked set; then $|\Pr[S_0]-\Pr[S_1]|\le\Adv^{\ExeInty}_{\mathcal{A},\HW}$. Hereafter every session $Q$ opens names a hostname in $\Pol$.

\textit{Game 2 (failure event, endpoint-authenticated provider channel).} Abort if plaintext reaches a peer that is not the authenticated endpoint for the resolved hostname; redirecting the session to a different peer breaks the channel's endpoint authentication, so $|\Pr[S_1]-\Pr[S_2]|\le\Adv^{\mathrm{ch}_{\mathrm{prov}}}_{\mathcal{A}}$.

In Game 2 plaintext reaches only the certificate-authenticated endpoint for a hostname in $\Pol$, so $\Pr[S_2]=0$. The sole remaining strategy, a certificate authority that mis-issues for a pinned hostname, is excluded by the trusted public-key-infrastructure assumption of \S\ref{sec:prelim} and contributes no term. Hence $\Adv^{\mathrm{dest}}_{\mathcal{A},\scheme}\le\Adv^{\ExeInty}_{\mathcal{A},\HW}+\Adv^{\mathrm{ch}_{\mathrm{prov}}}_{\mathcal{A}}$. \qed

\heading{Measurement anchoring (Lemma~\ref{lem:anchor})}
\textit{Game 0} ($\mathsf{G}^{\mathrm{anch}}$) runs the audit-and-pin phase over source $s$ with $m^{\ast}=\mathsf{Build}(s,R)$, then a session in which the sidecar releases plaintext; $S_0$ is the bad event that the release reaches an enclave not running the image built from $s$.

\textit{Game 1 (failure event, RA-unforgeability and platform soundness).} Abort if the sidecar accepts evidence not produced by a genuine enclave session running an image measuring $m^{\ast}$. Since the platform root signs only for genuinely loaded images, $|\Pr[S_0]-\Pr[S_1]|\le\Adv^{\RAunf}_{\mathcal{A},\HW}$, and hereafter the accepting session runs an image measuring $m^{\ast}$.

\textit{Game 2 (failure event, collision resistance).} Abort if that image differs from the one the client rebuilt from $s$ while both measure to $m^{\ast}$, so $|\Pr[S_1]-\Pr[S_2]|\le\Adv^{\mathrm{cr}}_{\mathcal{A}}$.

\textit{Game 3 (failure event, EUF-CMA).} Abort if the client pinned $m^{\ast}$ from an operator signature on source other than $s$, so $|\Pr[S_2]-\Pr[S_3]|\le\Adv^{\mathrm{euf\text{-}cma}}_{\mathcal{A}}$.

In Game 3 the release reaches an enclave running the image built from $s$, and because the sidecar encrypts to the attested channel key the secure channel carries it to that enclave alone, so $\Pr[S_3]=0$. A release thus reaches only the audited image except with probability at most $\Adv^{\RAunf}_{\mathcal{A},\HW}+\Adv^{\mathrm{cr}}_{\mathcal{A}}+\Adv^{\mathrm{euf\text{-}cma}}_{\mathcal{A}}$, the bound of Lemma~\ref{lem:anchor}. \qed

\heading{Accountability (Lemma~\ref{lem:account})}
\textit{Game 0} ($\mathsf{G}^{\mathrm{acc}}$) returns $1$ iff an honest verifier accepts a live measurement not bound to source the operator signed, that the log records, and that reproducibly rebuilds to it; $\Adv^{\mathrm{acc}}_{\mathcal{A}}=\Pr[S_0]$.

\textit{Game 1 (failure event, EUF-CMA).} Abort if the manifest's operator signature verifies on source the operator never signed, so $|\Pr[S_0]-\Pr[S_1]|\le\Adv^{\mathrm{euf\text{-}cma}}_{\mathcal{A}}$.

\textit{Game 2 (failure event, collision resistance).} Abort if two distinct manifests rebuild to the same measurement, so $|\Pr[S_1]-\Pr[S_2]|\le\Adv^{\mathrm{cr}}_{\mathcal{A}}$.

\textit{Game 3 (failure event, log inclusion and consistency).} Abort if the verifier accepts an inclusion proof for a manifest the log does not record, or reads a position whose contents differ from another honest verifier's; both break the log assumption, so $|\Pr[S_2]-\Pr[S_3]|\le\Adv^{\mathrm{log}}_{\mathcal{A}}$.

In Game 3 none of the three events occurs, so acceptance implies the binding and $\Pr[S_3]=0$. Summing gives the bound of Lemma~\ref{lem:account}. \qed

\heading{Security summary}
Table~\ref{tab:summary} collects each result with the advantage terms its bound sums and the structural assumptions its reduction takes on the loaded program $Q$ and the trust roots. The three reduction theorems each consume one clause of the faithfulness predicate, while the two anchoring lemmas consume none.

\begin{table*}[t]
\centering
\caption{Security summary. Each advantage is bounded by the sum of the listed terms of \S\ref{sec:prelim}; the structural assumptions are the non-probabilistic conditions on the loaded program $Q$ and the trust roots.}
\label{tab:summary}
\footnotesize
\renewcommand{\arraystretch}{1.25}
\begin{tabular}{@{}lll@{}}
\toprule
\multicolumn{1}{@{}c}{\textbf{Result}} & \multicolumn{1}{c}{\textbf{Advantage terms}} & \multicolumn{1}{c@{}}{\textbf{Structural assumptions}} \\
\midrule
Confidentiality (Thm~\ref{thm:conf}) & $\Iso,\ \ExeInty,\ \RAunf,\ \mathrm{ch}_{\mathrm{client}},\ \mathrm{ch}_{\mathrm{prov}}$ & $\NonInt(Q,\leak)$ \\
Faithful relay and streaming (Thm~\ref{thm:faithful}) & $\ExeInty,\ \mathrm{ch}_{\mathrm{client}},\ \mathrm{ch}_{\mathrm{prov}}$ & $\Verb(Q)$ \\
Destination authority (Thm~\ref{thm:dest}) & $\ExeInty,\ \mathrm{ch}_{\mathrm{prov}}$ & $\DestPol(Q,\Pol)$, trusted public-key infrastructure \\
Measurement anchoring (Lem~\ref{lem:anchor}) & $\RAunf,\ \mathrm{cr},\ \mathrm{euf\text{-}cma}$ & deterministic build, platform soundness \\
Accountability (Lem~\ref{lem:account}) & $\mathrm{euf\text{-}cma},\ \mathrm{cr},\ \mathrm{log}$ & none \\
\bottomrule
\end{tabular}
\end{table*}

\section{Symbolic Protocol Model}
\label{app:symbolic}

This supporting analysis cross-checks the protocol ordering of the game-based reductions of \S\ref{sec:formal} against a network adversary under ideal cryptography. It machine-checks the abstract protocol state machine, the bootstrapping protocol of \S\ref{sec:arpa} followed by the data path of Figure~\ref{fig:arch}, modeled in the applied pi calculus and verified with ProVerif~\cite{proverif}; the model, the falsification variants, and all verification logs ship with the artifact.

\heading{Model and adversary}
The adversary is the untrusted host. Every channel is a public Dolev--Yao channel, including the sidecar-to-enclave network, the host control channel, and the enclave-to-provider network, so the adversary reads, injects, replays, drops, and reorders all traffic, speaks the control protocol, and runs client sessions of its own. Encrypted payloads are protected by ideal public-key and authenticated encryption, so a public channel means adversary-controlled transport, not plaintext transport; lengths, timing, and traffic shape are not represented. The adversary holds a platform oracle that signs attestation evidence for any measurement other than the pinned $m^{\ast}$, modeling a host free to load and attest arbitrary images, and it chooses the policy name and the provider credential on every control reply, since account and policy selection are host authority. The honest parties are the sidecar holding the body and the pin, the enclave whose every session generates an ephemeral channel key bound into its quote, and the providers at the destinations in $\Pol$.

\heading{Event vocabulary}
The model marks an event at each step the properties constrain, placed after the checks that step performs, so each event asserts that its checks succeeded. $\mathsf{Qt}(m,k,n)$ records that the platform attests measurement $m$ inside an enclave session, binding the session's channel key $k$ and the nonce $n$. $\mathsf{Vrf}(m,k,n)$ records that the sidecar accepts evidence, after the signature, measurement, nonce, and channel-key checks. $\mathsf{Rel}(b,k,n,q)$ records that the sidecar releases body $b$ toward $k$ under request identifier $q$, and $\mathsf{Acc}(b,k,n,q)$ that an enclave session accepts $b$ after decryption under its own key. $\mathsf{Res}(p,d)$ records that the enclave resolves policy name $p$ to destination $d$ by lookup in $\Pol$. $\mathsf{PrvR}(b,d,q)$ and $\mathsf{PrvS}(b,d,q,r)$ record that the provider at $d$ receives $b$ and answers with token $r$, and $\mathsf{Fin}(b,q,r)$ that the client accepts $r$ for its own request $q$ after the authenticity check on the relayed response. The sidecar issues a fresh nonce $n$ per session and a fresh request identifier $q$ per release, and a provider answers with a fresh response token $r$ bound to $q$.

\heading{Bridging invariants}
Five invariants connect the construction to the operational properties of \S\ref{sec:formal-thms}, formalizing the ordering and authority those theorems assume the loaded code enforces. They are stated as follows.

\begin{table}[b]
\centering
\caption{Falsification variants. Each removes one check; the right column lists the properties whose proofs break, and all other proofs still verify.}
\label{tab:mutations}
\footnotesize
\renewcommand{\arraystretch}{1.12}
\begin{tabular}{@{}ll@{}}
\toprule
\multicolumn{1}{@{}c}{\textbf{Removed check}} & \multicolumn{1}{c@{}}{\textbf{Proofs that break}} \\
\midrule
Oracle refuses pinned measurement & P1 (agreement), P2, P3, P4, P5 \\
Channel key bound in evidence & P1 (agreement), P2, P3, P4, P5 \\
Nonce bound in evidence & P1 (agreement), P2 \\
Destination resolved in baked table & P2, P3, P4, P5 \\
Provider response authenticated & P4 \\
Body-free control channel & P5 (equivalence) \\
\bottomrule
\end{tabular}
\end{table} 

\noindent\textbf{P1, attested release.}
\begin{align*}
\mathsf{Vrf}(m^{\ast},k,n) &\Rightarrow_{\mathrm{inj}} \mathsf{Qt}(m^{\ast},k,n),\\
\mathsf{Rel}(b,k,n,q) &\Rightarrow_{\mathrm{inj}} \mathsf{Vrf}(m^{\ast},k,n).
\end{align*}
The first rules out evidence the platform never issued for the pinned image; the second rules out release without verification, and its injectivity rules out replay, since a captured document justifies no second release.

\noindent\textbf{P2, check-to-use binding.}
\[
\mathsf{Acc}(b^{\ast},k,n,q) \Rightarrow_{\mathrm{inj}} \mathsf{Rel}(b^{\ast},k,n,q).
\]
No party interposes between the check and the use, and no released request is accepted twice. The property is stated for bodies a verifying sidecar protects; the enclave also serves the host's own client sessions, whose bodies the host already knows.

\noindent\textbf{P3, destination authority.}
\[
\mathsf{PrvR}(b^{\ast},d,q) \Rightarrow \exists p.\ \mathsf{Res}(p,d) \wedge d \in \Pol.
\]
A name the host mints resolves to nothing, so the host chooses among destinations but never invents one.

\noindent\textbf{P4, response provenance.}
\[
\mathsf{Fin}(b^{\ast},q,r) \Rightarrow \exists d \in \Pol.\ \mathsf{PrvR}(b^{\ast},d,q) \wedge \mathsf{PrvS}(b^{\ast},d,q,r).
\]

\noindent\textbf{P5, secrecy and a body-free host view.} The adversary never derives the body, and its entire view, the two control-channel messages included, is observationally equivalent under a change of the body, so no host-visible field is body-derived.

\begin{theorem}[Symbolic invariants]
\label{thm:protocol}
In the symbolic model above, with ideal cryptography, unforgeable platform signatures, honest providers at every destination listed in $\Pol$, and the platform soundness assumption of \S\ref{sec:threat}, properties P1 through P5 hold against a host adversary that controls every channel, speaks the control protocol, chooses policy names and credentials, and holds an attestation oracle for every measurement other than the pinned one.
\end{theorem}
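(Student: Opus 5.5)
The proof is a ProVerif-style argument in the applied pi calculus. Each correspondence property (P1--P4) is established by backward reasoning over the derivations that produce its conclusion event. P5 is split into a reachability-secrecy part and an equivalence part. All the events are placed after the checks they certify, so each proof unfolds what the honest process must have checked before it emitted the event.

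\textbf{P1.} For the first correspondence, a $\mathsf{Vrf}(m^{\ast},k,n)$ event requires the sidecar to hold a platform-signed quote on $(m^{\ast},k,n)$. Under unforgeable platform signatures there are only two sources for such a signature: the honest platform process, which emits $\mathsf{Qt}$ just before signing, or the adversary's oracle. The oracle is excluded because it refuses $m^{\ast}$. Injectivity follows because $n$ is fresh per sidecar session, so distinct $\mathsf{Vrf}$ events carry distinct nonces. The second correspondence is syntactic, since $\mathsf{Rel}$ lies in the continuation of $\mathsf{Vrf}$, and the fresh $q$ makes it injective.

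\textbf{P2.} The enclave emits $\mathsf{Acc}(b^{\ast},k,n,q)$ only after decrypting under its ephemeral private key for $k$. That key is secret, since it is generated inside the session and only its public part is quoted. The body $b^{\ast}$ is protected, so it is never known to the adversary. Hence the ciphertext must originate from a sidecar $\mathsf{Rel}$ toward $k$. Injectivity uses the fresh request identifier $q$ inside the authenticated payload, together with a replay guard on $q$ in the enclave.

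\textbf{P3.} The only place the enclave sends plaintext toward a provider is after a table lookup $\Pol[p]$, which emits $\mathsf{Res}(p,d)$. Minted names fail the lookup and the process halts. The provider receives $b^{\ast}$ only through the encryption under $d$'s key, and by secrecy of $b^{\ast}$ the adversary cannot itself construct a message carrying $b^{\ast}$.

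\textbf{P4.} I combine the authenticity of the provider's response, which is bound to $q$ and $d$, with the fact that $d$ is resolved in $\Pol$ by P3.

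\textbf{P5.} Secrecy of $b^{\ast}$ is a standard attacker-knowledge query: every occurrence of $b^{\ast}$ is under encryption to a key that is honest, either by P1/P2 or because it belongs to a provider in $\Pol$. The equivalence part uses a biprocess $\mathrm{choice}[b_0,b_1]$ and ProVerif's diff-equivalence. The two control messages are built only from $g$, the labels, $F$, and the usage counters, none of which are terms containing $b$. The ciphertexts are randomized, so their two sides are indistinguishable under ideal encryption. The usage report must be abstracted as independent of $b$, which matches the leakage abstraction; lengths are not modelled.

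I expect P5's diff-equivalence to be the main obstacle. ProVerif may report spurious failures when conditionals branch differently on the two sides, for example a failover loop whose outcome depends on provider replies to different bodies. My first attempt is to make provider responses a fixed function of $q$ alone on both sides. I would also unroll the failover bound $\beta$ to a small constant, and argue separately that larger $\beta$ adds no body-dependent branch.
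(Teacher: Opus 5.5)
Your decomposition mirrors the paper's query structure: correspondence assertions for P1--P4, and a secrecy query plus an observational-equivalence query for P5. Your body-independent provider reply also matches the paper's fresh token $r$ bound to $q$. The paper's proof, however, is just the ProVerif run of those queries. It adds two reachability queries confirming that the honest pipeline completes, so no correspondence holds vacuously. Read as a hand proof, yours misplaces where injectivity comes from.

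For $\mathsf{Rel}(b,k,n,q)\Rightarrow_{\mathrm{inj}}\mathsf{Vrf}(m^{\ast},k,n)$, the freshness of $q$ is irrelevant. If one verified session could release several requests, each under its own fresh $q$, all those $\mathsf{Rel}$ events would map to the same $\mathsf{Vrf}$, and the injective query would be false. It holds because each verification licenses a single release (``a captured document justifies no second release'').

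Likewise, P2 cannot rest on a replay guard on $q$. The enclave keeps no state across requests, so no such guard exists in the design. ProVerif's monotone treatment of tables would also likely give you no leverage from the absence of an entry. Injectivity instead comes from each enclave session generating its own ephemeral key and accepting a single release. A replayed ciphertext therefore fails decryption at every other session and meets no further input at the original one.

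Two agreement steps are also missing.
\begin{itemize}
\item In P2 you show only that a ciphertext carrying $b^{\ast}$ and $q$ to $k$ came from a sidecar release. You never show that the nonce the enclave session records equals the sidecar's $n$. That needs the first clause of P1: the quote binding $(k,n)$ was issued by the very session owning $k$. This is why the paper's falsification table lists P2 among the properties broken when the nonce is dropped from the evidence.
\item In P4, response authenticity bound to $q$ and $d$ yields only $\mathsf{PrvS}(b',d,q,r)$ for some $b'$. To get $b'=b^{\ast}$ you must argue either that $q$ stays secret and is paired with $b^{\ast}$ only by the honest enclave, or that the authenticated response covers the body.
\end{itemize}

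Finally, P2 and P3 invoke the secrecy of $b^{\ast}$, while your P5 secrecy argument cites P2. Break the cycle by proving secrecy first, from the first clause of P1 and baked-table resolution alone; this is the dependency the falsification table exhibits. Alternatively, let ProVerif's saturation establish everything jointly, as the paper does.
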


\noindent\emph{Proof.} Machine-checked with ProVerif. The five properties compile into ten queries, correspondence assertions for P1 through P4 with the agreement and binding forms injective, and a secrecy query plus an observational-equivalence query for P5; all ten verify. Three further queries check the model rather than the protocol: two reachability queries confirm the honest pipeline completes, so the correspondences are not vacuously true, and one documents the single disclosure by design, that the host learns the gateway credential $g$ it issued (\S\ref{sec:threat}).

\heading{Falsification}
The model also detects the attacks it is meant to rule out. We re-ran all queries on six variants, each removing exactly one protocol check, and Table~\ref{tab:mutations} lists what breaks. The two checks that anchor the attested channel, and the baked-table resolution, fail broadly, body secrecy included; each narrower check breaks exactly the properties it carries.

\heading{Abstractions}
The model abstracts the enclave platform, the quote-verification library, the cipher layer, memory safety, provider behavior, and side channels, and it treats byte-for-byte faithfulness and the control-channel field schemas as assumptions, which the computational theorems of \S\ref{sec:formal} carry as named hardware and faithfulness assumptions.

\end{document}